\pgfplotsset{compat=newest}
\pgfplotsset{every axis legend/.append style={%
cells={anchor=west}}
}
\pgfplotsset{every axis/.append style={
                    label style={font=\footnotesize},
					tick label style={font=\footnotesize},
					legend style={font=\footnotesize}
                    }}
\Crefname{equation}{Eq.}{Eqs.}
\Crefname{figure}{Fig.}{Figs.}
\Crefname{tabular}{Tab.}{Tabs.}
\Crefname{section}{Sec.}{Secs.}
\newcommand{\data}{\mathcal{D}}
\newcommand{\poisoneddata}{{\tilde{\mathcal{D}}}}
\renewcommand{\norm}[1]{\left\lVert#1\right\rVert}
\newcommand{\frobenius}{\textrm{F}}
\DeclareMathOperator*{\argmin}{arg\,min}
\renewcommand{\trace}{Tr}
\title[Analysis and Detectability of Offline Data Poisoning Attacks on Linear Systems]{Analysis and Detectability of Offline Data Poisoning Attacks on Linear Dynamical Systems}
\author{%
 \Name{Alessio Russo} \Email{alessior@kth.se}\\
 \addr Division of Decision and Control Systems, KTH Royal Institute of Technology
}
\newtheorem{assumption}{Assumption}
\DeclareMathOperator{\vect}{vec}
\newcommand{\LS}{\textrm{LS}}
\newcommand{\arxiv}{}
\begin{document}

\maketitle

\begin{abstract}
In recent years, there has been a growing interest in the effects of data poisoning attacks on data-driven control methods. Poisoning attacks are well-known to the Machine Learning community, which, however, make use of assumptions, such as cross-sample independence, that in general do not hold for linear dynamical systems.
Consequently, these systems require different attack and detection methods than those developed for supervised learning problems in the i.i.d.\ setting.
Since most data-driven control algorithms make use of the least-squares estimator, we study how poisoning impacts the least-squares estimate through the lens of statistical testing, and question in what way data poisoning attacks can be detected.
We establish under which conditions the set of models compatible with the data includes the true model of the system, and we analyze different poisoning strategies for the attacker. On the basis of the arguments hereby presented, we propose a stealthy data poisoning attack on the least-squares estimator that can escape classical statistical tests, and conclude by showing the efficiency of the proposed attack. The code can be found here \url{https://github.com/rssalessio/data-poisoning-linear-systems}.
\end{abstract}
\begin{keywords}%
data poisoning; data corruption; data-driven control; linear systems.
\end{keywords}

\section{Introduction}
Over the past few decades, the rise in computational power, data accessibility, technological progress, and successful results have fueled research in data-driven methods.
Nonetheless, these methods may be vulnerable to data poisoning attacks, which aim to degrade performance by altering the training data \cite{biggio2012poisoning,barreno2006can,barreno2010security}.
The concept of poisoning was originally introduced for anomaly detection \cite{barreno2006can,kloft2010online,rubinstein2009antidote} and attacks against SVM models \cite{biggio2012poisoning}. Thereafter, various machine learning models have been shown to be susceptible to poisoning attacks \cite{jagielski2018manipulating,xiao2015feature,zhang2020online,shafahi2018poison} (see also \cite{tian2022comprehensive} for a survey).
In the field of systems control, data-driven methods are used due to a variety of reasons, such as decreased modeling complexity and/or reduced costs. In fact, data-driven control methods allow the user to formulate a control law directly from the data, thus bypassing the need of modeling the dynamical systems. 
There are a variety of these methods to use:  techniques based on Willem's et al. lemma \cite{willems2005note,de2019formulas,coulson2019data}, Virtual Reference Feedback Tuning (VRFT) \cite{campi2002virtual}, Iterative Feedback Tuning \cite{hjalmarsson2002iterative}, Correlation-based Tuning \cite{karimi2004iterative}, etc.
However, data-driven control methods may also be affected by poisoning attacks. In  \cite{russo2021poisoning}, the authors formulate  a poisoning attack against the VRFT method. Similarly, in \cite{2209.09108} the authors propose a poisoning attack against data-driven predictive control methods.
In \cite{showkatbakhsh2016secure,feng2021learning}, the authors consider how to recover the underlying model of the system from poisoned data, while \cite{chekan2020regret} considers enlarging the confidence set of the poisoned parameters to improve the performance of online LQR. The detection and analysis of these attacks, however, have received limited attention.  
Through first principles thinking, we investigate which models are compatible with the poisoned data. We focus on the least-squares (LS) estimator and study how poisoning affects the performance of this estimator. Through the lens of statistical tests, we examine how data can be poisoned and how these attacks can be detected. Our last contribution is to propose an attack that can  impact the LS estimator while being stealthy to classical statistical tests, including  residual and correlation tests typically used for anomaly detection. We provide examples and numerical simulations to accompany our results.

\section{Related work}\label{sec:related_work}
Data poisoning attacks can be categorized into two classes: \textit{untargeted poisoning attacks} and \textit{targeted poisoning attacks} \cite{tian2022comprehensive}. Attacks in the former class lead to some form of denial-of-service and try to hinder the convergence of the target model. On the other hand, targeted attacks change the data so that the trained model behaves according to the goal of the attacker \cite{liu2017trojaning,shafahi2018poison}. 
Countermeasures include preventive measures (e.g., encryption) or reactive measures (e.g., detection). 
In \cite{nguyen2013exact,bhatia2017consistent}, the authors propose different techniques to recover a linear model from the data for oblivious adversaries, while for adaptive adversaries recovery is possible  under some stringent assumptions \cite{bhatia2015robust}. 
As mentioned in \cite{bhatia2017consistent}, it seems unlikely that consistent estimators are even possible in face of a fully adaptive adversary.
However, to the best of our knowledge, most of these techniques are not directly applicable to control problems, due to the underlying dynamics of the system.  In \cite{alfeld2016data}, the authors study a targeted attack that poisons the forecasting of an autoregressive model. In \cite{showkatbakhsh2016secure}, the authors consider the problem of identifying a system whose output measurements have been corrupted by an adversary. They consider an omniscient adversary and given a bound on the number of attacked sensors, and some observability conditions, it is possible to derive a model that is useful for stabilizing the original system. A similar problem is studied in \cite{feng2021learning}: the authors consider a linear system affected by an unknown sparse adversarial disturbance $d_t$, and study how to recover the original model of the system.
A different problem is studied in \cite{chekan2020regret}, where the authors consider online poisoning of the adaptive LQR method
\cite{abbasi2011regret}, and, to compensate for the attack, they enlarge the confidence set of the estimator.
In \cite{russo2021poisoning} they formulate  a  bi-level optimization problem  to compute poisoning attacks against data-driven control methods.  Their attack is then applied to the digital twin of a building to demonstrate the potential of their attack \cite{russo2021data}. Lastly, in \cite{2209.09108} they extend the bi-level attack problem to attack data-driven predictive control methods.

\section{Preliminaries}\label{sec:preliminaries}

\paragraph{Model.} We consider a discrete-time LTI system affected by process noise:
\begin{equation}\label{eq:system}
x_{t+1}=A_\star x_t+B_\star u_t+w_t,
\end{equation}
where $t \in \mathbb{Z}$ is the discrete time variable, $x_t\in \mathbb{R}^n$ is the state of the system, $u_t\in \mathbb{R}^m$ is the control signal, $A_\star\in\mathbb{R}^{n\times n}, B_\star\in \mathbb{R}^{n\times  m}$ are the unknown system matrices, and $w_t\in \mathcal{W}\subseteq\mathbb{R}^n$ is an unmeasured disturbance belonging to some convex set $\mathcal{W}$ (which can be the entire $\mathbb{R}^n$). 
For a sequence of input-state measurements $\{u_k\}_{k=0}^{T-1}, \{x_k\}_{k=0}^T$, we define $\psi_t = (x_t^\top, u_t^\top)^\top \in \mathbb{R}^{n+m}$ and the following data matrices:
\begin{align*}
X_+ \coloneqq \begin{bmatrix}x_1 & \dots & x_T\end{bmatrix}, \quad X_-\coloneqq \begin{bmatrix}x_0 & \dots & x_{T-1}\end{bmatrix},\quad
U_- \coloneqq \begin{bmatrix}u_0 & \dots & u_{T-1}\end{bmatrix}, 
\end{align*}
and $X = \begin{bmatrix}x_0 & \dots & x_{T}\end{bmatrix}, \Psi_- = \begin{bmatrix} \psi_0 &\dots & \psi_{T-1}\end{bmatrix}$. We make the assumption that the user has access to one trajectory of the system,  used for identification or data-driven control.
\begin{assumption}\label{assumption:rank_condition}
The data $\mathcal{D}=(U_-,X)$ available to the user consists of one input-state trajectory of length $T$. Furthermore, the data satisfy the rank condition $\rank \Psi_-=n+m$.
\end{assumption}
This is a standard assumption in data-driven control, and it can be guaranteed for noise-free systems by choosing a persistently exciting input signal of order $n+1$  \cite{willems2005note}.

\begin{figure}[t]
    \centering
    \includegraphics[width=0.8\linewidth]{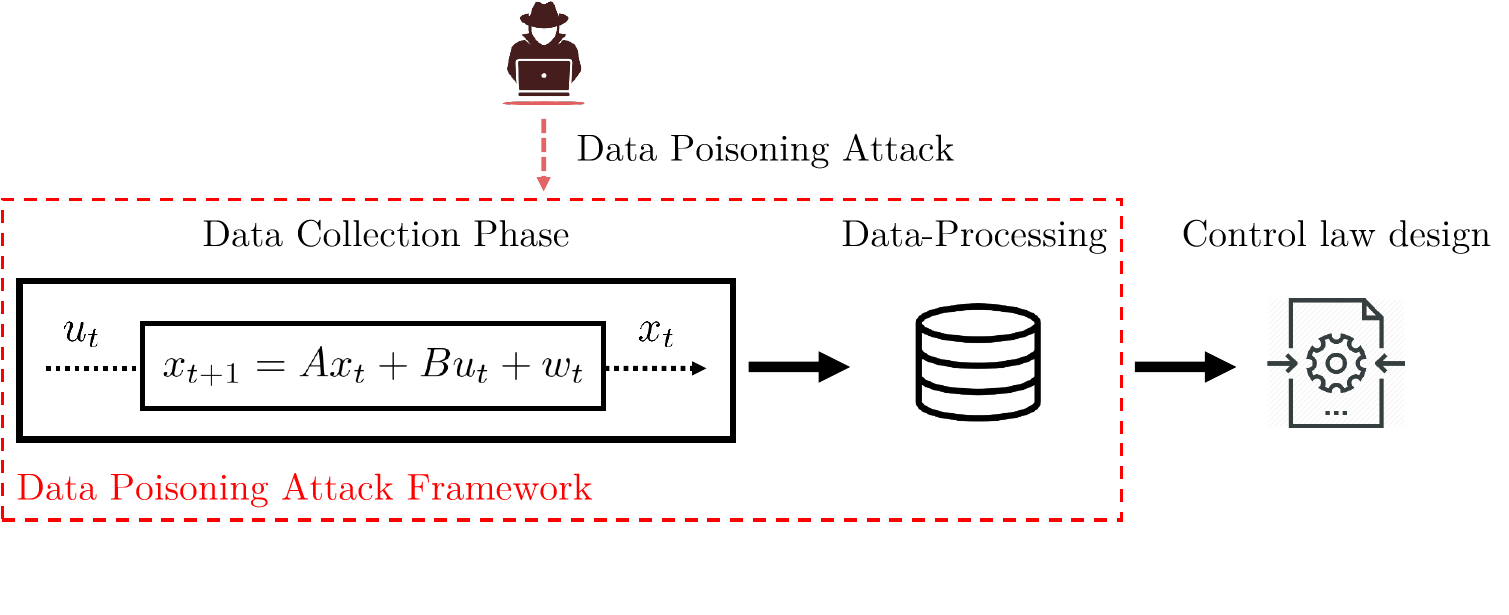}
    \caption{Data poisoning framework for data-driven control methods (figure adapted from \cite{russo2021poisoning}).}
    \label{fig:data_poisoning_attack_scheme}
\end{figure}

\paragraph{Data poisoning attacks.} We denote by $\Delta \data \coloneqq (\Delta U_-,\Delta X)$ the poisoning signals on the input-state measurements, so that $\Delta U_-=\begin{bmatrix}\Delta u_0 &\dots \Delta u_{T-1}\end{bmatrix} \in \mathbb{R}^{m\times T}$, $\Delta X= \begin{bmatrix}\Delta x_0 &\dots \Delta x_{T}\end{bmatrix}\in \mathbb{R}^{n\times T+1}$. We let $\tilde U_- =U_-+\Delta U_-$, $\tilde X=X+\Delta X$ be the resulting poisoned signals. Similarly, we denote the poisoned dataset by $\poisoneddata \coloneqq (\tilde U_-,\tilde X)$ and let $\tilde \psi_t=\psi_t+\Delta \psi_t$,  where $\Delta \psi_t = (\Delta x_t^\top, \Delta u_t^\top)^\top$ (sim. we define $\tilde \Psi_-$, and $\Delta \Psi_-$).  
Attacks in the literature  are generally classified as \textit{targeted attacks} or \textit{untargeted attacks}. \textit{Untargeted attacks} just try to alter the performance of the data-driven control scheme, causing a denial-of-service. 
On the other hand,  \textit{targeted attacks} are usually carried out by the attacker  to achieve some specific goals, e.g., making the closed-loop system unstable, maximizing the energy used by the system, making the system uncontrollable, etc. The attacker's goal is formulated as a  bi-level optimization problem (see also \cite{russo2021poisoning})
    \begin{equation}
        \begin{aligned}
            \max_{(\Delta U_{-},\Delta X)\in \mathcal{U}\times\mathcal{X}} \quad & \mathcal{A}\left(\data, K\right)           \textrm{ s.t. }   K \in \argmin_{K'} \mathcal{L}(U_-+\Delta U_-, X+\Delta X,K'),
        \end{aligned}
    \end{equation}
    where $(\mathcal{U},\mathcal{X})$ is a convex set, $K$ is the closed-loop controller, $\mathcal{A}$ represents the objective function of the malicious agent, and $\mathcal{L}$ represents the function used by the victim to compute the control law $K$ according to the poisoned data $(\tilde U_-, \tilde  X)$. 

\section{Attacks and Detection Strategies}
In this section we examine what is the set of pairs $(A,B)$ that are compatible with  the poisoned data $\poisoneddata$, and establish a sufficient and necessary condition for $(A_\star,B_\star)$ to be compatible with $\poisoneddata$. We investigate  how least-squares estimate changes under poisoning, examine attack detection, and propose a stealthy untargeted attack on the least-squares estimate.
\subsection{The set of compatible models under data poisoning}
Most data-driven control methods assume that there exists a linear system $(A,B)$ that is consistent with the data. Ignoring the noise term $w_t$ in \cref{eq:system}, consistency amounts to finding all $(A,B)\in \mathbb{R}^{n\times (n+m)}$ that satisfy the equation $X_+=AX_-+BU_-$ (\emph{i.e.}, all the pairs consistent with the dataset). In presence of a noise signal $w_t$, consistency is derived from the following relationship
\begin{equation}\label{eq:fundamental_relationship}
    \begin{bmatrix}
    I_n & -A_\star & -B_\star
    \end{bmatrix} \begin{bmatrix}
    X_+\\ \Psi_-
    \end{bmatrix}=W_-,
\end{equation}
where $W_- =\begin{bmatrix}
w_0&w_1&\dots &w_{T-1}
\end{bmatrix}\in\mathcal{W}^T$ (where $\mathcal{W}^T\coloneqq\bigtimes_{n=1}^{T} \mathcal{W}$). Then, given a poisoned dataset $\tilde\data=(\tilde U_-,\tilde X)$,  we wonder for which $(A,B)\in \mathbb{R}^{n\times (n+m)}$ and $\tilde W\in \mathcal{W}^T$ the following relationship holds
\begin{equation}\label{eq:fundamental_eq_poisoned_system_simplified}
    \begin{bmatrix}
    I_n & -A & -B
    \end{bmatrix} \begin{bmatrix}
    \tilde X_+ \\ \tilde \Psi_-
    \end{bmatrix}= \tilde W_-.
\end{equation}
To answer this question, we seek the  set of noise sequences  $\mathcal{W}_\poisoneddata$ that are compatible with  the data $\poisoneddata$. 
Following a similar approach as in \cite{koch2020verifying}, we note that the compatible noise terms $\{\tilde w_t\}_{t}$ belong to the image of $\tilde G\coloneqq \begin{bmatrix}
\tilde X_+^\top & \tilde \Psi_-^\top
\end{bmatrix}^\top$.  Straightforwardly, if $\tilde W_-\tilde G^\perp=0$, then $\tilde W_-$ is compatible with the data (where $\tilde G^\perp$ denotes a basis of the kernel of $\tilde G$), and $\mathcal{W}_{\poisoneddata}$ is
\begin{equation}\mathcal{W}_{\poisoneddata} =\left\{
 \tilde W_- \in \mathcal{W}^T: \tilde W_-\tilde G^\perp =0 \right\}.
 \end{equation}
Therefore, the following result  characterizes in which cases $(A_\star,B_\star)\in \Sigma_{\poisoneddata}$.
\begin{lemma}\label{lemma:compatible_a_b}\footnote{\ifdefined \arxiv All the proofs can be found in the appendix. \else Refer to the technical report \url{https://arxiv.org/abs/2211.08804} for all the proofs and numerical details.\fi}
Consider a poisoned dataset $\poisoneddata$. The set of all pairs $(A,B)$ that are consistent with the data $\poisoneddata$ is
$
    \Sigma_{\tilde\data}\coloneqq \left\{(A,B)\in \mathbb{R}^{n\times n+m}:
    \exists \tilde W_-\in \mathcal{W}_\poisoneddata: (\ref{eq:fundamental_eq_poisoned_system_simplified}) \hbox{ holds for } (A,B, \tilde W_-)  \right\}.
$
Let $\tilde W_\star = W_-+ \Delta W_-$, where $\Delta W_-=\Delta X_+ -A_\star \Delta X_- - B_\star \Delta U_-$. Then  $(A_\star, B_\star) \in \Sigma_{\poisoneddata}\iff \tilde W_\star \in \mathcal{W}^T.$
 \end{lemma}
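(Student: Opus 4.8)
The plan is to reduce membership in $\Sigma_{\poisoneddata}$ to a single transparent condition on a residual, and then evaluate that condition at $(A_\star,B_\star)$. First I would observe that once a pair $(A,B)$ is fixed, \cref{eq:fundamental_eq_poisoned_system_simplified} determines the candidate noise sequence \emph{uniquely}: it forces $\tilde W_- = \tilde X_+ - A\tilde X_- - B\tilde U_- = \tilde X_+ - \begin{bmatrix}A & B\end{bmatrix}\tilde\Psi_-$. Hence $(A,B)\in\Sigma_{\poisoneddata}$ if and only if this particular $\tilde W_-$ lies in $\mathcal{W}_{\poisoneddata}$, i.e. if and only if (i) $\tilde W_- \in \mathcal{W}^T$ and (ii) $(\tilde X_+ - \tilde W_-)\tilde\Psi_-^\perp = 0$.

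The key step is that condition (ii) is automatically satisfied by this candidate, so it imposes nothing. Indeed, by construction $\tilde X_+ - \tilde W_- = \begin{bmatrix}A & B\end{bmatrix}\tilde\Psi_-$, and since the columns of $\tilde\Psi_-^\perp$ span $\ker\tilde\Psi_-$ we have $\tilde\Psi_-\tilde\Psi_-^\perp = 0$, whence $(\tilde X_+ - \tilde W_-)\tilde\Psi_-^\perp = \begin{bmatrix}A & B\end{bmatrix}\tilde\Psi_-\tilde\Psi_-^\perp = 0$ (the identity being vacuous if $\tilde\Psi_-$ has trivial kernel). Consequently $(A,B)\in\Sigma_{\poisoneddata}$ if and only if $\tilde X_+ - A\tilde X_- - B\tilde U_- \in \mathcal{W}^T$.

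It then remains to specialize to $(A,B) = (A_\star,B_\star)$ and identify the residual. Starting from the fundamental relationship \cref{eq:fundamental_relationship} for the clean data, $X_+ - A_\star X_- - B_\star U_- = W_-$, and substituting $\tilde X_+ = X_+ + \Delta X_+$, $\tilde X_- = X_- + \Delta X_-$, $\tilde U_- = U_- + \Delta U_-$, a one-line computation gives
\begin{equation*}
\tilde X_+ - A_\star \tilde X_- - B_\star \tilde U_- = W_- + (\Delta X_+ - A_\star\Delta X_- - B_\star\Delta U_-) = W_- + \Delta W_- = \tilde W_\star.
\end{equation*}
Plugging this into the reduced criterion yields $(A_\star,B_\star)\in\Sigma_{\poisoneddata} \iff \tilde W_\star \in \mathcal{W}^T$, as claimed.

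I do not expect a genuine obstacle here: the only point requiring care is recognizing that the orthogonality constraint defining $\mathcal{W}_{\poisoneddata}$ becomes redundant once the model $(A,B)$ — and hence the implied noise sequence — is fixed, after which the statement collapses to bookkeeping with the definition of $\Delta W_-$ and the clean-data identity \cref{eq:fundamental_relationship}.
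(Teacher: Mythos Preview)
Your proposal is correct and follows essentially the same approach as the paper: both arguments hinge on the observation that, for a fixed pair $(A,B)$, the relation \eqref{eq:fundamental_eq_poisoned_system_simplified} pins down $\tilde W_-$ uniquely and the kernel condition in $\mathcal{W}_{\poisoneddata}$ is then automatic, so membership reduces to $\tilde W_-\in\mathcal{W}^T$; at $(A_\star,B_\star)$ this residual is computed to be $\tilde W_\star$ via \eqref{eq:fundamental_relationship}. The only organizational difference is that the paper treats the two implications separately, whereas you first derive the equivalent criterion for arbitrary $(A,B)$ and then specialize, which is slightly more streamlined but not a different idea.
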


\noindent In most applications, $\mathcal{W}$ is considered to be $\mathbb{R}^n$ itself, which essentially guarantees that $(A_\star,B_\star) \in \Sigma_\poisoneddata$. However, if the disturbance $\{w_t\}_{t\geq0}$ is generated according to some probability measure $\mathbb{P}$, then the likelihood of $\tilde W_\star$ may be  small under $\mathbb{P}$ depending on the attack.
In other applications, $\mathcal{W}$ is a known bounded convex set, and may not contain the sequence $\tilde W_\star$. In this case, it is necessary to enlarge $\mathcal{W}$ to be able to recover the original model. Lastly, in some scenarios the user may know in advance an over-approximate $\hat{\Sigma}_\data$ of $\Sigma_\data$, which can be used to infer if the data has been poisoned in case $\hat{\Sigma}_\data$ and $\Sigma_\poisoneddata$ are too different (using, for example, Bayesian hypothesis testing).

This result can also be interpreted in the following way: if $\mathcal{W}$ is a bounded convex set, an attacker may try to bound $\Delta X$ and $\Delta U_-$ to bound $\Delta W_-$, and, consequently, act on the compatibility of the true system matrices. Obviously, upper bounding the norm of the poisoning signals seems like a good way to make sure that  is less detectable. However, just bounding the norm of the poisoning signals, as we see in the forthcoming sections, is not enough to achieve undetectability of an attack.

\subsection{Attack strategies for the least-squares estimator}\label{subsec:lsestimator}
To better understand how to formulate attack strategies, and analyze the problem of detectability, we now turn our attention to the least-squares (LS) estimator. As pointed out in \cite{de2019formulas}, this LS estimate is used in the formulation of data-driven controllers. The unpoisoned LS estimate  can be compactly written as $( A_\LS,  B_\LS) = X_+ \Psi_-^\dagger$ ($\Psi_-^\dagger$ is the right inverse of $\Psi_-$). We also denote by $(A_\LS, B_\LS)$ and $(\tilde A_\LS,\tilde B_\LS)$ the LS estimates when, respectively,  $\data$  and $\poisoneddata$ are used. Furthermore, let $(\Delta \tilde A_\LS,\Delta \tilde B_\LS)=(\tilde A_\LS - A_\star,\tilde B_\LS- B_\star)$ the difference between the LS estimate and the true parameter, and indicate by $\Delta \tilde \theta_\LS = \vect(\Delta \tilde A_\LS,\Delta \tilde B_\LS)$ its vectorization.  We further assume that $\rank \tilde \Psi_-=n+m$.  In presence of a poisoning attack $(\Delta X, \Delta U_-)$ we obtain the following straightforward result, which is used to discuss possible attack strategies.
\begin{lemma}\label{lemma:estimate_delta}
The LS  error is given by
$
    (\Delta \tilde A_\LS, \Delta \tilde B_\LS) =  \tilde W_\star \tilde \Psi_-^\dagger,
$
where  $\tilde W_\star$ is as in lemma \ref{lemma:compatible_a_b}. In addition to that, we have
$
\sigma_{\text{min}}(W_-) + \sigma_{\text{min}}(\Delta W_-)\leq \|\tilde \Psi_-\|_{\textnormal{F}} \|\Delta \tilde \theta_\LS\|_2 \leq  \sigma_{\text{max}}(W_-) + \sigma_{\text{max}}(\Delta W_-)$, where $\|\cdot\|_\textnormal{F}$ indicates the Frobenius norm and $\sigma_{\text{min}}, \sigma_{\text{max}}$ the minimum and maximum singular values.
\end{lemma}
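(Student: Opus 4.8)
The plan is to first derive an exact closed form for the poisoned least-squares error and then read off the two-sided estimate from it. First I would stack the dynamics \eqref{eq:system} over $t=0,\dots,T-1$ as $X_+ = [A_\star\ B_\star]\Psi_- + W_-$, substitute $X_+=\tilde X_+-\Delta X_+$ and $\Psi_-=\tilde\Psi_--\Delta\Psi_-$, and collect terms to obtain
\[
\tilde X_+ = [A_\star\ B_\star]\tilde\Psi_- + \bigl(W_-+\Delta X_+-A_\star\Delta X_- -B_\star\Delta U_-\bigr)=[A_\star\ B_\star]\tilde\Psi_- + \tilde W_\star,
\]
with $\tilde W_\star=W_-+\Delta W_-$ exactly the matrix from Lemma~\ref{lemma:compatible_a_b}. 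Since $\rank\tilde\Psi_-=n+m$, we have $\tilde\Psi_-\tilde\Psi_-^\dagger=I_{n+m}$, so post-multiplying by $\tilde\Psi_-^\dagger$ and using $[\tilde A_\LS\ \tilde B_\LS]=\tilde X_+\tilde\Psi_-^\dagger$ yields $[\tilde A_\LS\ \tilde B_\LS]=[A_\star\ B_\star]+\tilde W_\star\tilde\Psi_-^\dagger$, i.e.\ $(\Delta\tilde A_\LS,\Delta\tilde B_\LS)=\tilde W_\star\tilde\Psi_-^\dagger$, which is the first assertion.

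For the norm bound I would use $\|\Delta\tilde\theta_\LS\|_2=\|(\Delta\tilde A_\LS,\Delta\tilde B_\LS)\|_{\frobenius}$ and post-multiply the identity by $\tilde\Psi_-$, obtaining $(\Delta\tilde A_\LS,\Delta\tilde B_\LS)\tilde\Psi_-=\tilde W_\star\,\tilde\Psi_-^\dagger\tilde\Psi_-$, where $\tilde\Psi_-^\dagger\tilde\Psi_-$ is the orthogonal projector onto $\image\tilde\Psi_-^\top$ (equal to $I$ precisely when $T=n+m$). I would then sandwich $\|(\Delta\tilde A_\LS,\Delta\tilde B_\LS)\tilde\Psi_-\|_{\frobenius}$: from above by submultiplicativity, $\|(\Delta\tilde A_\LS,\Delta\tilde B_\LS)\tilde\Psi_-\|_{\frobenius}\le\|\Delta\tilde\theta_\LS\|_2\|\tilde\Psi_-\|_{\frobenius}$, and from below using that $\tilde\Psi_-$ has full row rank, so that $\tilde\Psi_-\tilde\Psi_-^\top\succeq\sigma_{\min}(\tilde\Psi_-)^2 I_{n+m}$ and hence $\|(\Delta\tilde A_\LS,\Delta\tilde B_\LS)\tilde\Psi_-\|_{\frobenius}\ge\sigma_{\min}(\tilde\Psi_-)\|\Delta\tilde\theta_\LS\|_2$. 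On the middle quantity $\|\tilde W_\star\,\tilde\Psi_-^\dagger\tilde\Psi_-\|_{\frobenius}$ I would pass to the singular values of $\tilde W_\star$ (projection does not increase them), and the last ingredient is a Weyl-type sub-/super-additivity estimate for the extreme singular values of the sum $\tilde W_\star=W_-+\Delta W_-$, producing the bound $\sigma_{\max}(W_-)+\sigma_{\max}(\Delta W_-)$ from above and $\sigma_{\min}(W_-)+\sigma_{\min}(\Delta W_-)$ from below; threading these through the sandwich delivers the stated two-sided inequality.

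The first part is a one-line computation; the real work is the norm estimate. The subtlety is that $\tilde\Psi_-^\dagger$ acts on the right, so $\tilde\Psi_-^\dagger\tilde\Psi_-$ is only a projector rather than $I_{n+m}$, and one must keep every submultiplicative and reverse-submultiplicative step correctly oriented while making sure it is $\|\tilde\Psi_-\|_{\frobenius}$ — not its spectral norm or condition number — that survives, and that the estimate is phrased through $\sigma_{\min},\sigma_{\max}$ of $W_-$ and $\Delta W_-$ separately rather than of their sum. Getting all of this to line up in a single clean two-sided inequality is where I expect the main difficulty to lie; I would expect the argument to lean on the SVD of $\tilde\Psi_-$ to carry out the bookkeeping.
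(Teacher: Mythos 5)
Your derivation of the identity $(\Delta\tilde A_\LS,\Delta\tilde B_\LS)=\tilde W_\star\tilde\Psi_-^\dagger$ is correct and is essentially the paper's own second argument: write $\tilde X_+=[A_\star\ B_\star]\tilde\Psi_-+\tilde W_\star$ and use that $\tilde\Psi_-$ has full row rank so that $\tilde\Psi_-\tilde\Psi_-^\dagger=I_{n+m}$. No issue there.

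The norm estimate, however, does not go through as planned, for three separate reasons. First, your sandwich points the wrong way. After multiplying back by $\tilde\Psi_-$ you obtain $\|\tilde W_\star \tilde\Psi_-^\dagger\tilde\Psi_-\|_{\textnormal{F}}\le\|\Delta\tilde\theta_\LS\|_2\,\|\tilde\Psi_-\|_{\textnormal{F}}$, which is a \emph{lower} bound on the quantity $\|\tilde\Psi_-\|_{\textnormal{F}}\|\Delta\tilde\theta_\LS\|_2$ that the lemma asks you to bound from \emph{above}. The companion inequality $\sigma_{\min}(\tilde\Psi_-)\|\Delta\tilde\theta_\LS\|_2\le\|\tilde W_\star \tilde\Psi_-^\dagger\tilde\Psi_-\|_{\textnormal{F}}$ does yield an upper bound on $\|\Delta\tilde\theta_\LS\|_2$, but only after dividing by $\sigma_{\min}(\tilde\Psi_-)$; since $\|\tilde\Psi_-\|_{\textnormal{F}}\ge\sqrt{n+m}\,\sigma_{\min}(\tilde\Psi_-)$, replacing $\sigma_{\min}(\tilde\Psi_-)$ by $\|\tilde\Psi_-\|_{\textnormal{F}}$ on the left strengthens the inequality and cannot be justified. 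Second, ``projection does not increase singular values'' gives $\|\tilde W_\star \tilde\Psi_-^\dagger\tilde\Psi_-\|_{\textnormal{F}}\le\|\tilde W_\star\|_{\textnormal{F}}=\bigl(\sum_{i=1}^n\sigma_i(\tilde W_\star)^2\bigr)^{1/2}$, which exceeds $\sigma_{\max}(\tilde W_\star)$ whenever $n>1$, so even with the correct orientation you would not land on $\sigma_{\max}(W_-)+\sigma_{\max}(\Delta W_-)$. Third, the ``super-additivity'' $\sigma_{\min}(W_-+\Delta W_-)\ge\sigma_{\min}(W_-)+\sigma_{\min}(\Delta W_-)$ is false in general (take $\Delta W_-=-W_-$); Weyl's inequality only gives $\sigma_{\min}(W_-+\Delta W_-)\ge\sigma_{\min}(W_-)-\sigma_{\max}(\Delta W_-)$. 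The paper's proof sidesteps the first two issues by never multiplying back by $\tilde\Psi_-$: it expands $\|\Delta\tilde\theta_\LS\|_2^2=\trace\bigl(\tilde W_\star^\top\tilde W_\star\,\tilde\Psi_-^\top(\tilde\Psi_-\tilde\Psi_-^\top)^{-2}\tilde\Psi_-\bigr)$, applies Von Neumann's trace inequality to pull out $\sigma_{\max}(\tilde W_\star)^2\trace\bigl((\tilde\Psi_-\tilde\Psi_-^\top)^{-1}\bigr)$, and then identifies $\trace\bigl((\tilde\Psi_-\tilde\Psi_-^\top)^{-1}\bigr)$ with $\|\tilde\Psi_-\|_{\textnormal{F}}^{-2}$ — a delicate step (it holds with equality only in the scalar case $n+m=1$) that carries the entire weight of the bound and that your route never reaches, so you would still need to supply it.
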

\paragraph{Relationship between poisoning and exploration.} 
This result provides a way to formulate possible attack strategies and to analyze their impact. The adversary clearly needs to minimize the amount of exploration, quantified by the term  $\|\tilde \Psi_-\|_{\textnormal{F}}=\sum_{t=0}^{T-1}\|\tilde \psi_t\|_2 $ to maximize the error.  Fundamentally, any attack wishing to maximize the error needs to change the data as to minimize the exploration performed by the victim.  As an informal argument, define $\tilde C_T = \sum_{t=0}^{T-1}(\tilde\psi_t \otimes I_n) (\tilde\psi_t\otimes I_n)^\top$ and introduce the {\it unexcitation subspace}  $\tilde{\mathcal{U}}=\left\{\theta \in \mathbb{R}^{n(n+m)}:  \lim\sup_{T\to\infty} \theta^\top \tilde C_T\theta < \infty\right\}$ (see \cite{bittanti1992effective}), and let $\tilde{\mathcal{E}}$ be its orthogonal complement (the data generation process is undefined on purpose, since it is just an illustrative argument).
Denote by $\Delta\tilde\theta_\LS^{\tilde{\mathcal{E}}}$ and $\Delta\tilde \theta_\LS^{\tilde{\mathcal{U}}}$ the orthogonal projections of $\Delta\tilde \theta_\LS$ on these two subspaces, so that $\Delta\tilde \theta_\LS=\Delta\tilde \theta_\LS^{\tilde{\mathcal{E}}} + \Delta\tilde\theta_\LS^{\tilde{\mathcal{U}}}$.  Then, under some simple assumptions, it is possible to show that asymptotically $\|\Delta \tilde\theta_\LS\|_2 = \|\Delta\tilde \theta_\LS^{\tilde{\mathcal{U}}}\|_2$. 
Hence, maximizing $\|\Delta\tilde \theta_\LS\|$ amounts to changing the unpoisoned regressor $ \psi_t$ so that,  the unexcitation subspace becomes "\emph{larger}", which implies that the amount of exploration is  lowered.  
\ifdefined\arxivTo
To see this, note that  $\|\Delta \tilde \theta_\LS\|_{\tilde C_T^{1/2}} \geq  \sqrt{(\Delta\tilde\theta_\LS^{\tilde{\mathcal{E}}})^\top \tilde C_T  \Delta\tilde\theta_\LS^{\tilde{\mathcal{E}}}}  - \sqrt{(\Delta\tilde\theta_\LS^{\tilde{\mathcal{U}}})^\top \tilde C_T  \Delta\tilde\theta_\LS^{\tilde{\mathcal{U}}}}$, where the second term is clearly bounded for all $T$, while the first one diverges as $T\to\infty$ unless $\Delta\tilde\theta_\LS^{\tilde{\mathcal{E}}}=0$ asymptotically. 
If the LS estimator converges, we conclude that asymptotically  $\Delta\tilde\theta_\LS^{\tilde{\mathcal{E}}}=0 \Rightarrow \|\Delta \tilde\theta_\LS\|_2 = \|\Delta\tilde\theta_\LS^{\tilde{\mathcal{U}}}\|_2 $. 
\else\fi
 To formalize the concept, let $\{v_i\}_{i=1}^{n(n+m)}$ be an orthonormal basis of $\mathbb{R}^{n(n+m)}$ with $v_1 = \theta_\LS/ \|\theta_\LS\|_2$, where  $\theta_\LS=\vect(A_\LS, B_\LS)$ is the true LS-estimate for an unpoisoned dataset. Then, the estimation error in the direction of $v_k$ is given by $|v_k^\top\Delta \tilde \theta_\LS| $, which is lower bounded as follows.
\begin{corollary}\label{corollary:lower_bound_error_exploration}
For any $k =1,\dots,n(n+m)$, and a poisoned dataset $\poisoneddata$
\begin{equation}
\sqrt{( v_k^\top \Delta \tilde \theta_\LS)^2} \geq \frac{|\cos(\alpha_k)| (\sigma_{\text{min}}(W_-) +\sigma_{\text{min}}(\Delta W_-))}{\|V_k \tilde\Psi_-\|_{\textnormal{F}}},
\end{equation}
where $V_k = \vect_{n,n+m}^{-1}(v_k)$\footnote{$\vect_{ab}^{-1}(x)$ reshapes a vector $x\in\mathbb{R}^{ab}$ into a matrix of size $a\times b$ by arranging the elements of $x$ column-wise.}, and $\alpha_k$ is the angle between $\vect(V_k \tilde \Psi_-)$ and $\vect(\tilde W_\star)$.
\end{corollary}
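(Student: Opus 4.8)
The plan is to specialise the closed-form error expression of Lemma~\ref{lemma:estimate_delta} to the coordinate $v_k$, convert the resulting scalar into a Frobenius inner product in which $V_k\tilde\Psi_-$ and $\tilde W_\star$ appear, extract $\cos(\alpha_k)$ via the Cauchy--Schwarz identity, and close with the singular-value lower bound already used in Lemma~\ref{lemma:estimate_delta}. First I would use Lemma~\ref{lemma:estimate_delta} to write $\Delta\tilde\theta_\LS=\vect(\tilde W_\star\tilde\Psi_-^\dagger)$, so that, since $v_k=\vect(V_k)$ and $a^\top\vect(M)=\langle\vect^{-1}(a),M\rangle_{\mathrm{F}}$,
\[
  v_k^\top\Delta\tilde\theta_\LS=\langle V_k,\,\tilde W_\star\tilde\Psi_-^\dagger\rangle_{\mathrm{F}}.
\]

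Next I would bring in the matrix $V_k\tilde\Psi_-$. The key structural fact is that the least-squares residual $R_\LS:=\tilde X_+-(\tilde A_\LS,\tilde B_\LS)\tilde\Psi_-=\tilde W_\star-(\Delta\tilde A_\LS,\Delta\tilde B_\LS)\tilde\Psi_-$ has its rows in the orthogonal complement of the row space of $\tilde\Psi_-$; hence $\langle V_k\tilde\Psi_-,R_\LS\rangle_{\mathrm{F}}=0$, and therefore $\langle V_k\tilde\Psi_-,\tilde W_\star\rangle_{\mathrm{F}}=\langle V_k\tilde\Psi_-,(\Delta\tilde A_\LS,\Delta\tilde B_\LS)\tilde\Psi_-\rangle_{\mathrm{F}}$. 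Using $(\Delta\tilde A_\LS,\Delta\tilde B_\LS)=\tilde W_\star\tilde\Psi_-^\dagger$, the pseudo-inverse identities $\tilde\Psi_-\tilde\Psi_-^\dagger=I_{n+m}$ and $\tilde\Psi_-^\dagger\tilde\Psi_-\tilde\Psi_-^\dagger=\tilde\Psi_-^\dagger$, and cyclicity of the trace, one then relates $\langle V_k\tilde\Psi_-,\tilde W_\star\rangle_{\mathrm{F}}$ to $v_k^\top\Delta\tilde\theta_\LS$ (the Gram matrix $\tilde\Psi_-\tilde\Psi_-^\top$ produced at this stage has to be carried along), and likewise $\|V_k\tilde\Psi_-\|_{\mathrm{F}}$ to the same quantities.

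By definition, $\alpha_k$ is the angle between $\vect(V_k\tilde\Psi_-)$ and $\vect(\tilde W_\star)$ in $\mathbb{R}^{nT}$, so the Cauchy--Schwarz identity gives $\langle V_k\tilde\Psi_-,\tilde W_\star\rangle_{\mathrm{F}}=\cos(\alpha_k)\,\|V_k\tilde\Psi_-\|_{\mathrm{F}}\,\|\tilde W_\star\|_{\mathrm{F}}$. Substituting the relation from the previous step and solving for $|v_k^\top\Delta\tilde\theta_\LS|$ divides through by $\|V_k\tilde\Psi_-\|_{\mathrm{F}}$ and leaves the factor $|\cos(\alpha_k)|\,\|\tilde W_\star\|_{\mathrm{F}}$. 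To finish, I would lower-bound $\|\tilde W_\star\|_{\mathrm{F}}$ by $\sigma_{\min}(W_-)+\sigma_{\min}(\Delta W_-)$, writing $\tilde W_\star=W_-+\Delta W_-$ and invoking the singular-value (Weyl-type) inequalities already used in the proof of Lemma~\ref{lemma:estimate_delta}, and then take the square root of the resulting squared inequality.

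I expect the second step to be the main obstacle: one has to push the pseudo-inverse $\tilde\Psi_-^\dagger$ and the Gram matrix $\tilde\Psi_-\tilde\Psi_-^\top$ through the trace manipulations so that $v_k^\top\Delta\tilde\theta_\LS$ and $\langle V_k\tilde\Psi_-,\tilde W_\star\rangle_{\mathrm{F}}$ end up related by precisely the factor $\|V_k\tilde\Psi_-\|_{\mathrm{F}}$ that appears in the denominator of the statement. Once that bookkeeping is settled, extracting $\cos(\alpha_k)$ and applying the singular-value lower bound use only tools already present in Lemma~\ref{lemma:estimate_delta}.
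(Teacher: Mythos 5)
Your overall architecture is the paper's: use Lemma \ref{lemma:estimate_delta} to write the error in closed form, turn $v_k^\top \Delta\tilde\theta_\LS$ into a Frobenius inner product involving $V_k\tilde\Psi_-$ and $\tilde W_\star$, extract $\cos(\alpha_k)$ by Cauchy--Schwarz, and finish with $\|\tilde W_\star\|_{\textnormal{F}}\geq \sigma_{\text{min}}(W_-)+\sigma_{\text{min}}(\Delta W_-)$. The first and last steps, and the Cauchy--Schwarz step, match the paper exactly.

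The step you flag as ``the main obstacle'' is, however, a genuine gap rather than bookkeeping. Writing $G=\tilde\Psi_-\tilde\Psi_-^\top$, the exact identities are $v_k^\top\Delta\tilde\theta_\LS=\langle V_k,\tilde W_\star\tilde\Psi_-^\dagger\rangle_{\textnormal{F}}$ and, via your residual-orthogonality argument, $\langle V_k\tilde\Psi_-,\tilde W_\star\rangle_{\textnormal{F}}=\langle V_kG,\;\tilde W_\star\tilde\Psi_-^\dagger\rangle_{\textnormal{F}}$. To reach the stated bound you need $\langle V_k\tilde\Psi_-,\tilde W_\star\rangle_{\textnormal{F}}=\|V_k\tilde\Psi_-\|_{\textnormal{F}}^2\,v_k^\top\Delta\tilde\theta_\LS$, i.e.\ you must replace $V_kG$ by the scalar multiple $\|V_k\tilde\Psi_-\|_{\textnormal{F}}^2\,V_k$. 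No combination of trace cyclicity and pseudo-inverse identities achieves this for a generic orthonormal basis: it holds precisely when $V_kG=\lambda V_k$, equivalently when $v_k$ is an eigenvector of $G\otimes I_n$, in which case $\lambda=\trace(V_kGV_k^\top)=\|V_k\tilde\Psi_-\|_{\textnormal{F}}^2$. The paper supplies exactly this missing ingredient by asserting the spectral decomposition $(\tilde\Psi_-\tilde\Psi_-^\top\otimes I_n)=VDV^\top$ in the chosen basis with $(D)_{kk}=\|V_k\tilde\Psi_-\|_{\textnormal{F}}^2$, so that $v_k^\top(G\otimes I_n)^{-1}=(D)_{kk}^{-1}v_k^\top$; that claim (itself delicate for an arbitrary basis anchored at $v_1=\theta_\LS/\|\theta_\LS\|_2$) is what your sketch lacks, and without it the chain from $v_k^\top\Delta\tilde\theta_\LS$ to $|\cos(\alpha_k)|\,\|\tilde W_\star\|_{\textnormal{F}}/\|V_k\tilde\Psi_-\|_{\textnormal{F}}$ does not close.
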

The term $\|V_k \tilde \Psi_-\|_\textnormal{F}$ can be interpreted as the  total amount of exploration in the direction of $v_k$. Minimizing the exploration in the direction of the true estimate $\theta_\LS$ implies a larger value of $|v_1^\top\Delta \tilde \theta_\LS| $, which is larger when the estimator error $\Delta \tilde \theta_\LS$ is parallel to $\theta_\LS$, from which we deduce that asymptotically the unexcitation subspace includes the unpoisoned estimate  $\theta_\LS$. 

These results not only shed a light on the mechanics of poisoning, but also help us define a possible  attack. The attacker can compute some poisoning signals $(\Delta X,\Delta U_-)$ by solving the convex problem $\min_{(\Delta X,\Delta U_-)\in\mathcal{C}} \|\tilde \Psi_-\|_{\textnormal{F}}$ (or $\min_{(\Delta X,\Delta U_-)\in\mathcal{C}} \|V_1 \tilde \Psi_-\|_{\textnormal{F}}$), where $\mathcal{C}$ is some convex set.
Nevertheless, this simple attack ignores other terms, such as $\Delta W_-$, and therefore may not be enough to significantly impact the LS estimate.  As we discuss in the next sections, maximizing the norm of the LS residuals fills this gap. Furthermore, as we see,  an attacker needs to impose additional constraints on the optimization problem to make an attack stealthy. To that aim, we begin by discussing how statistical hypothesis testing can help to detect poisoning attacks.

\subsection{Detection  analysis for the LS estimator}
The detection of any attack should be based on two important ingredients: (1)  prior knowledge of the system and its signals; (2) independent statistical tests. Prior knowledge is useful to detect possible wrongdoings, however, that knowledge may be biased. Therefore, it is important to complement tests based on prior knowledge with tests that are independent of that  knowledge. 
\ifdefined\arxiv
Prior knowledge, for example, includes confidence bounds on the variance of the noise and/or knowledge of the statistical properties of $\{(x_t,u_t)\}_t$. If any of those are known, it is possible to derive one-sample tests to assess these statistics.\else\fi
In the following, we relate poisoning attacks to classical statistical tests.
We begin our study by considering attacks on the input signal, and then consider attacks on the state signal as well.

\subsubsection{Detection of Attacks on the input signal} The statistical properties of the input signal are usually assumed to be known. In fact, in most experiments, the input is usually chosen as a white noise signal, as to excite the dynamics of the system. Assuming $\{u_t\}_t$ is a sequence of i.i.d. random variables with distribution $P$, whiteness tests \cite{box2015time,drouiche2000new} can be used to deduce if the samples in  $\{\tilde u_t\}_t$ are white, while one-sample tests (such as the Kolmogorov-Smirnov test \cite{massey1951kolmogorov}, or the Anderson-Darling test \cite{nelson1998anderson}) can be used to assess whether the samples in  $\{\tilde u_t\}_t$ are distributed according to $P$. More simply, if $u_t \sim \mathcal{N}(0,I_m)$, then $\|U_-\|_\textnormal{F}^2 \sim \chi^2(Tm)$ is a Chi-squared distribution with $Tm$ degrees of freedom. Consequently, for a small $\delta\geq 0$ we see the constraint $\|\Delta U_-\|_\textnormal{F}\leq \delta  \|U_-\|_\textnormal{F}$ as a way to constraint the Chi-squared statistics.
Along this reasoning, an important class of input attacks can be derived when $(u_t,\tilde u_t)$ are indistinguishable, i.e., statistically equivalent.
\begin{definition}
Suppose that $u_t$ is i.i.d., distributed according to $P$ for every $t\geq 0$. Then, $(u_t,\tilde u_t)$ are indistinguishable if $\tilde u_t$ is i.i.d. and distributed according to $P$  for every $ t\geq 0$.
\end{definition}
To illustrate the attack, consider the following example.
\begin{example}\label{example:input_poisoning}
Consider the  system $x_{t+1}=a_\star x_t +b_\star u_t+w_t$, with $(a_\star,b_\star)=(0.7,0.5)$ and $w_t\sim\mathcal{N}(0, 1)$.   In \cref{fig:input_poisoning_example_zonotope} are shown the confidence intervals for the LS estimate when the input has been poisoned according to the indistinguishable attack $\tilde u_t$, where $(u_t,\tilde u_t)$ are independent, i.i.d., and distributed according to $\mathcal{N}(0,1)$. Under poisoning, the estimate of $b_\star$ converges to $0$ as $T\to\infty$.
\end{example}
\begin{figure}[t]
    \centering
    \includegraphics[width=.8\linewidth]{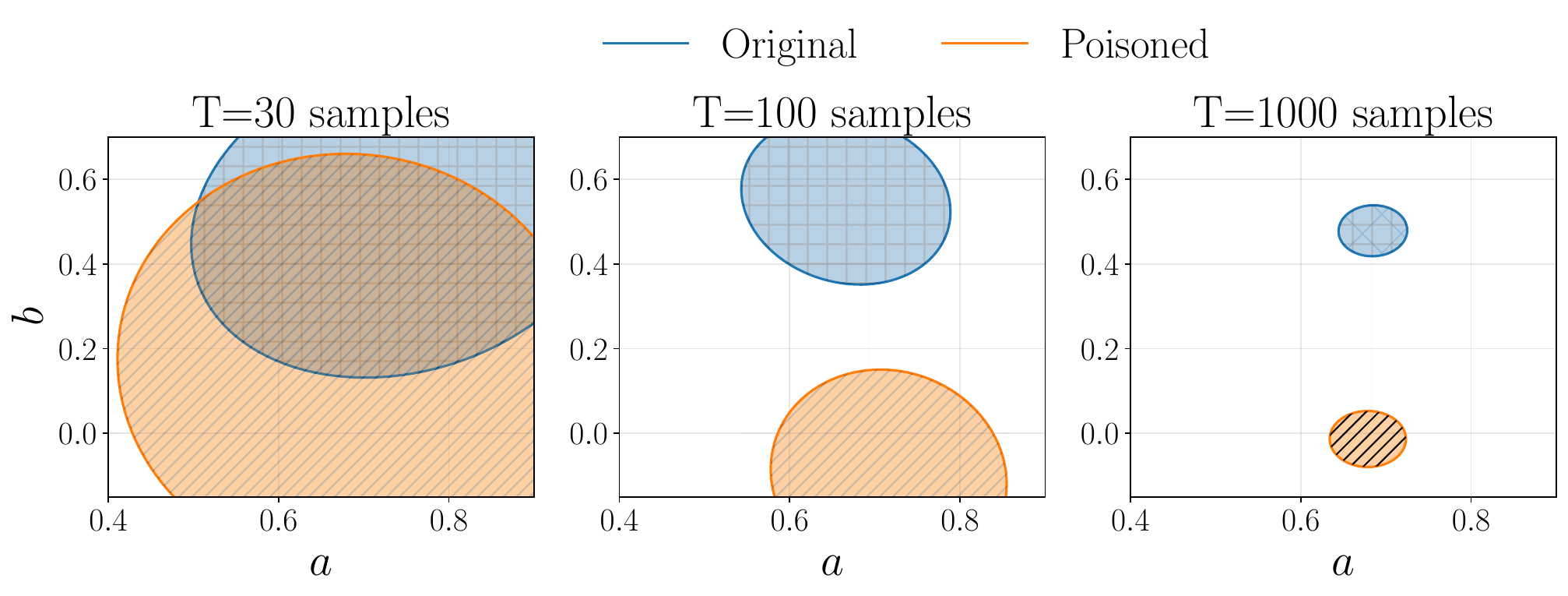}
    \caption{Example of input poisoning (see \cref{example:input_poisoning}). The colored ellipses depict the 95\% confidence interval of the LS estimates of the true parameters $(a_\star,b_\star)$.  When the input data is  poisoned, we obtain an $F$-statistic of $Z_\poisoneddata\approx(0.45, 0.62, 0.15)$ respectively for $T=(15,100,1000)$ samples. Otherwise, we obtain  $Z_\data \approx (7.35,29.05,233.35)$.}
    \label{fig:input_poisoning_example_zonotope}
\end{figure}
This insight leads  us to  the following result.
\begin{lemma}\label{lemma:input_attack_indistinguishable_asymptotics}
Let $\{u_t\}_{t=0}^{T-1}$ be an i.i.d. sequence distributed according to $\mathcal{N}(0,\Sigma_u)$. Assume $\Delta u_t=-u_t+a_t$ to be an indistinguishable attack, with $a_t\sim \mathcal{N}(0,\Sigma_u)$ independent of $u_t$, and $\Delta x_t=0, t\geq 0$. Then, if $A$ is stable,  $\Delta \tilde B_\LS\to -B_\star$ w.p. $1$ as $T\to \infty$.
\end{lemma}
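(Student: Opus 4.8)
The plan is to reduce the claim to an asymptotic computation on normalized Gram matrices and then invoke laws of large numbers. First, note that under this attack the recorded state is unchanged, $\tilde x_t = x_t$, while the recorded input is $\tilde u_t = u_t + \Delta u_t = a_t$; writing $A_- \coloneqq \begin{bmatrix}a_0 & \cdots & a_{T-1}\end{bmatrix}$ we get $\tilde\Psi_- = \begin{bmatrix}X_-^\top & A_-^\top\end{bmatrix}^\top$, and since $\Delta X = 0$ we have $\Delta W_- = -B_\star\Delta U_- = B_\star(U_- - A_-)$. Hence \cref{lemma:estimate_delta} gives
\begin{equation*}
(\Delta\tilde A_\LS,\Delta\tilde B_\LS) = \tilde W_\star\tilde\Psi_-^\dagger,\qquad \tilde W_\star = X_+ - A_\star X_- - B_\star A_- = W_- + B_\star(U_- - A_-).
\end{equation*}
Since $\rank\tilde\Psi_- = n+m$, we may write $\tilde\Psi_-^\dagger = \tilde\Psi_-^\top(\tilde\Psi_-\tilde\Psi_-^\top)^{-1}$, so $(\Delta\tilde A_\LS,\Delta\tilde B_\LS) = \big(\tfrac1T\tilde W_\star\tilde\Psi_-^\top\big)\big(\tfrac1T\tilde\Psi_-\tilde\Psi_-^\top\big)^{-1}$.

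Next, I would identify the almost-sure limits of the two factors. Since $A_\star$ is stable and $\{(u_t,w_t,a_t)\}_{t\geq0}$ is i.i.d.\ with finite second moments, the state process is exponentially stable, so $\tfrac1T X_-X_-^\top \to \Sigma_x$ a.s., with $\Sigma_x$ the unique solution of $\Sigma_x = A_\star\Sigma_x A_\star^\top + B_\star\Sigma_u B_\star^\top + \Sigma_w$, while $\tfrac1T A_-A_-^\top\to\Sigma_u$ by the ordinary strong law. Crucially, every cross block vanishes a.s.\ — namely $\tfrac1T X_-A_-^\top$, $\tfrac1T W_-A_-^\top$, $\tfrac1T U_-A_-^\top$, $\tfrac1T W_-X_-^\top$, $\tfrac1T U_-X_-^\top\to 0$ — because each is an average of mean-zero products with independent factors ($a_t\perp x_t$, $a_t\perp w_t$, $a_t\perp u_t$, $w_t\perp x_t$, $u_t\perp x_t$), by the i.i.d.\ structure and the causality of the state recursion. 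Therefore $\tfrac1T\tilde\Psi_-\tilde\Psi_-^\top\to\diag(\Sigma_x,\Sigma_u)$, and, using $\tilde W_\star = B_\star U_- + W_- - B_\star A_-$, we get $\tfrac1T\tilde W_\star X_-^\top\to 0$ and $\tfrac1T\tilde W_\star A_-^\top\to -B_\star\Sigma_u$, i.e.\ $\tfrac1T\tilde W_\star\tilde\Psi_-^\top\to\begin{bmatrix}0 & -B_\star\Sigma_u\end{bmatrix}$.

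Under the standard non-degeneracy assumptions $\Sigma_u\succ0$, $\Sigma_w\succ0$ (the latter giving $\Sigma_x\succ0$), the limit $\diag(\Sigma_x,\Sigma_u)$ is invertible, so continuity of matrix inversion yields $(\Delta\tilde A_\LS,\Delta\tilde B_\LS)\to\begin{bmatrix}0 & -B_\star\Sigma_u\end{bmatrix}\diag(\Sigma_x^{-1},\Sigma_u^{-1}) = \begin{bmatrix}0 & -B_\star\end{bmatrix}$ a.s. In particular $\Delta\tilde B_\LS\to -B_\star$ with probability one (and, as a byproduct, $\Delta\tilde A_\LS\to 0$: the attack corrupts the estimate of $B_\star$ but not that of $A_\star$).

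The main obstacle is making the almost-sure convergences of the empirical moments rigorous, since the summands are not i.i.d. The cross terms I would handle via a strong law for martingale differences: with $\mathcal{F}_t = \sigma(x_0,\{u_s,w_s,a_s\}_{s\leq t})$, each $x_ta_t^\top$ is an $\mathcal{F}_t$-martingale difference ($x_t$ is $\mathcal{F}_{t-1}$-measurable, $a_t$ is mean zero and independent of $\mathcal{F}_{t-1}$), and the condition $\sum_t t^{-2}\,\expectation{\norm{x_t}^2}<\infty$ holds because stability of $A_\star$ bounds $\expectation{\norm{x_t}^2}$ uniformly in $t$; the same applies to $w_tx_t^\top$ and $u_tx_t^\top$. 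The convergence $\tfrac1T X_-X_-^\top\to\Sigma_x$ follows from the ergodic theorem for the asymptotically stationary, geometrically ergodic state process (equivalently, from standard sample-covariance convergence for exponentially stable linear filters of i.i.d.\ noise). Lastly, one should note that the non-degeneracy of $\Sigma_w$ is not actually needed for the stated claim: even if $\Sigma_x$ is singular, the population least-squares loss $\expectation{\norm{x_{t+1}-Ax_t-Ba_t}^2} = \expectation{\norm{x_{t+1}-Ax_t}^2} + \mathrm{tr}(B\Sigma_uB^\top)$ (the cross term drops since $a_t\perp(x_{t+1},x_t)$) decouples in $B$ and is uniquely minimized at $B=0$ when $\Sigma_u\succ0$, so the $B$-block of the estimate still converges to $0$.
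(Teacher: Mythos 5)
Your proof is correct and follows essentially the same route as the paper: both reduce to showing $\tfrac1T\tilde\Psi_-\tilde\Psi_-^\top\to\diag(\Sigma_x,\Sigma_u)$ (block-diagonal by independence of $a_t$ and $x_t$) and $\tfrac1T\tilde W_\star\tilde U_-^\top = \tfrac1T\sum_t\bigl(w_t - B_\star(-u_t+a_t)\bigr)a_t^\top\to -B_\star\Sigma_u$, then read off $\Delta\tilde B_\LS\to -B_\star$. Your write-up is in fact more careful than the paper's (which leaves the almost-sure convergence of the empirical moments and the invertibility of the limiting Gram matrix implicit), so the added martingale-SLLN and degenerate-$\Sigma_x$ remarks are welcome but not a departure in method.
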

Intuitively, if the poisoned input is completely uncorrelated from the data, then the best estimate  of $B_\star$ is $0$. Simply, as the number of samples grows larger,  $B_\star+\Delta \tilde B_\LS \to 0$.

To detect this type of attack, we propose to test the explanatory power of the input data. 
Using classical partial $F$-tests \cite{kleinbaum2013applied}, we test the hypothesis $H_0: \|B_\star\|=0$ against $H_1: \|B_\star\|\neq 0$.  Assume the underlying system is affected by some process noise $w_t \sim \mathcal{N}(0,\Sigma_w)$. Denote by $\tilde A^{(1)}_\LS$ the LS estimate of $A_\star$  when the input signal $\tilde u_t$ is not used by the LS estimator. Similarly, denote by $( \tilde A^{(2)}_\LS, \tilde B^{(2)}_\LS)$ the LS estimate when  $\tilde u_t$ is considered in the estimation process.
Consider the LS residuals, and define the statistic
\begin{equation}
Z_\poisoneddata\coloneqq\frac{\norm{X_+ -  \tilde A^{(1)}_\LS X_-}_\textnormal{F}^2 -\norm{X_+ -  \tilde A^{(2)}_\LS X_-+  \tilde B^{(2)}_\LS \tilde U_-}_\textnormal{F}^2 }{(nm/(T-n(n+m)-1))\norm{X_+- \tilde A^{(2)}_\LS X_-+  \tilde B^{(2)}_\LS \tilde U_-}_\textnormal{F}^2 }.
\end{equation}
Under $H_0$ it can be  shown that the statistics $Z_\poisoneddata$ follows an $F$ distribution with $(nm,T-n(n+m)-1)$ degrees of freedom (follows from an  application of \cite[Lemma II.4]{ljung1998system}). Using this partial $F$-test, we reject $H_0$ if and only if $Z_\poisoneddata> f_{nm,T-n(n+m)-1}^\alpha$, where $f_{a,b}^\alpha$ is  the upper $\alpha$-point of an $F$ distribution with $(a,b)$ degrees of freedom. In conclusion, not rejecting $H_0$ may indicate that the input data has been poisoned. Clearly, for more complex cases, we need to resort to other tools, such as the analysis of the residuals, as explained in the following section.


\subsubsection{Residual analysis}\label{sec:residual_analysis}
We claim that any  attacker that wishes to remain stealthy needs to make sure that the residuals of the LS procedure satisfy certain statistical conditions. We begin by deriving the following bound on the residuals of the LS estimate for generic attacks that are independent of the noise signal $\{w_t\}_t$ (this includes the class of oblivious attacks). Let the residual of $(\tilde A_\LS, \tilde B_\LS)$ at time $t$ be $\tilde R_t= \tilde x_{t+1}-\tilde A_\LS\tilde x_t - \tilde B_\LS\tilde u_t$. In matrix notation, we write $\tilde R = \begin{bmatrix}
 \tilde R_0 &\tilde R_1 &\dots &\tilde R_{T-1}
\end{bmatrix}=\tilde X_+ - \begin{bmatrix}\tilde A_\LS & \tilde B_\LS\end{bmatrix}\tilde \Psi_-$ (similarly, we denote by $R$ the residuals in absence of poisoning).

\begin{lemma}\label{lemma:MSE_residuals_under_attack}
Assume  $\{(\Delta x_t,\Delta u_t)\}_t$  to be independent of the i.i.d. noise sequence $\{w_t\}_t$, with $w_t\sim \mathcal{N}(0,\Sigma_w)$.  Then, the MSE $\mathbb{E}\left[\|\tilde R\|_{\textnormal{F}}^2\right]$ satisfies
$
    0\leq\mathbb{E}\left[\|\tilde R\|_{\textnormal{F}}^2-\norm{ R}_{\textnormal{F}}^2\right] \leq  \mathbb{E}[\sum_{i}\sigma_i^2(\Delta W_-)]
$, where $\sigma_i$ is the i-th singular value.
  Furthermore, $\|  R\|_{\textnormal{F}}^2$ is a quadratic form of a normal random vector, distributed according to $\sum_{i=1}^n \lambda_i \chi^2(T-n-m)$, with $\lambda_i$ being the $i$-th eigenvalue of $\Sigma_w$.
\end{lemma}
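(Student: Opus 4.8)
The plan is to reduce both residual matrices to a fixed orthogonal projection applied to a noise‑plus‑attack term, and then to carry out the moment and distributional computations conditionally on the regressor matrices (i.e.\ treating $\tilde\Psi_-$ and $\Psi_-$ as given, which is the setting of \cite[Lemma II.4]{ljung1998system}). First I would rewrite the LS residuals through the pseudo‑inverse: since $[\tilde A_\LS\ \tilde B_\LS]=\tilde X_+\tilde\Psi_-^\dagger$ and $\rank\tilde\Psi_-=n+m$, the matrix $\tilde\Pi^\perp\coloneqq I_T-\tilde\Psi_-^\dagger\tilde\Psi_-$ is the orthogonal projector onto $\ker\tilde\Psi_-$, of rank $T-n-m$, so $\tilde R=\tilde X_+\tilde\Pi^\perp$. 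Substituting the identity $\tilde X_+=[A_\star\ B_\star]\tilde\Psi_-+\tilde W_\star$ (this is \cref{eq:fundamental_eq_poisoned_system_simplified} with $(A,B)=(A_\star,B_\star)$, where $\tilde W_\star=W_-+\Delta W_-$ and $\Delta W_-=\Delta X_+-A_\star\Delta X_--B_\star\Delta U_-$ as in \cref{lemma:compatible_a_b}) and using $\tilde\Psi_-\tilde\Pi^\perp=0$ yields $\tilde R=\tilde W_\star\tilde\Pi^\perp$; the same argument with $\Delta\data=0$ gives $R=W_-\Pi^\perp$ for the analogous rank‑$(T-n-m)$ projector $\Pi^\perp$ built from $\Psi_-$.

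For the sandwich bound I would expand $\|\tilde R\|_{\textnormal{F}}^2=\trace(\tilde\Pi^\perp\tilde W_\star^\top\tilde W_\star)$ with $\tilde W_\star=W_-+\Delta W_-$. Conditioning on the regressors and on the attack (so that $\tilde\Pi^\perp$ and $\Delta W_-$ are frozen while $W_-$ keeps its law, using the assumed independence of $\{(\Delta x_t,\Delta u_t)\}_t$ and $\{w_t\}_t$), the two cross terms are zero‑mean since $W_-$ is independent of both and $\mathbb{E}[W_-]=0$, while $\mathbb{E}[\trace(\tilde\Pi^\perp W_-^\top W_-)]=\trace(\Sigma_w)\trace(\tilde\Pi^\perp)=(T-n-m)\trace(\Sigma_w)$, which equals $\mathbb{E}[\|R\|_{\textnormal{F}}^2]$ by the same computation (or from the distribution derived below). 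Hence $\mathbb{E}[\|\tilde R\|_{\textnormal{F}}^2-\|R\|_{\textnormal{F}}^2]=\mathbb{E}[\trace(\tilde\Pi^\perp\Delta W_-^\top\Delta W_-)]=\mathbb{E}[\|\Delta W_-\tilde\Pi^\perp\|_{\textnormal{F}}^2]$, and since $\tilde\Pi^\perp$ is an orthogonal projector the pointwise bounds $0\le\|\Delta W_-\tilde\Pi^\perp\|_{\textnormal{F}}^2\le\|\Delta W_-\|_{\textnormal{F}}^2$ give, after taking expectations, the claimed two‑sided bound with $\Delta W_-=\Delta X_+-A_\star\Delta X_--B_\star\Delta U_-$.

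For the law of $\|R\|_{\textnormal{F}}^2$ I would write $\Pi^\perp=QQ^\top$ with $Q\in\mathbb{R}^{T\times(T-n-m)}$, $Q^\top Q=I_{T-n-m}$ (orthonormal eigenvectors for eigenvalue $1$), so that $\|R\|_{\textnormal{F}}^2=\|W_-Q\|_{\textnormal{F}}^2=\vect(W_-)^\top(\Pi^\perp\otimes I_n)\vect(W_-)$, a quadratic form in $\vect(W_-)$. Since $\vect(W_-)\sim\mathcal{N}(0,I_T\otimes\Sigma_w)$ and $\vect(W_-Q)=(Q^\top\otimes I_n)\vect(W_-)$, the matrix $W_-Q$ has $T-n-m$ independent columns, each $\mathcal{N}(0,\Sigma_w)$; diagonalizing $\Sigma_w=U\diag(\lambda_1,\dots,\lambda_n)U^\top$ and rotating each column by $U^\top$, its squared norm is $\sum_{i=1}^n\lambda_i g_i^2$ with i.i.d.\ standard normal $g_i$, and summing over the independent columns and regrouping by eigenvalue gives $\|R\|_{\textnormal{F}}^2=\sum_{i=1}^n\lambda_i\chi_i^2$ with independent $\chi_i^2\sim\chi^2(T-n-m)$, as claimed.

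The only genuinely delicate point, and hence the main obstacle, is the conditioning step: for a single trajectory of \cref{eq:system} the regressor $\tilde\Psi_-$ is itself a function of $W_-$ (future states depend on past noise), so $\tilde\Pi^\perp$ and $W_-$ are not independent, and both the vanishing of the cross terms and the exact $\chi^2$ law only hold once the regressors are treated as fixed (equivalently, after conditioning on them), in the spirit of \cite[Lemma II.4]{ljung1998system}; I would state the result in that sense. Everything else is routine: pseudo‑inverse identities and properties of orthogonal projectors for the reduction, and standard matrix‑normal manipulations together with the spectral factorization of $\Sigma_w$ for the distributional part.
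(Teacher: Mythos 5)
Your proof is correct and follows essentially the same route as the paper's: reduce $\tilde R$ to $\tilde W_\star$ times the rank-$(T-n-m)$ projector onto $\ker \tilde \Psi_-$, expand the trace of the quadratic form, kill the cross term in expectation via the independence assumption, bound the attack term by dropping the projector, and diagonalize to obtain the weighted-$\chi^2$ law (the paper phrases this last step through a Wishart matrix, which is equivalent to your column-wise rotation argument). Your closing remark --- that for a single trajectory $\tilde\Psi_-$ is itself a function of $W_-$, so the cross-term cancellation and the exact $\chi^2$ law really hold conditionally on the regressors --- identifies a genuine subtlety that the paper's proof passes over silently, and stating the result in that conditional sense is the right fix.
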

\ifdefined\arxiv As a corollary, for  oblivious random attacks, we find that if $\Delta x_t \sim \mathcal{N}(0,\Sigma_{\Delta x})$ and $\Delta u_t \sim \mathcal{N}(0,\Sigma_{\Delta u})$ are i.i.d., then the previous result can be improved as follows
$
    \mathbb{E}\left[\|\tilde R\|_{\textnormal{F}}^2-\| R\|_{\textnormal{F}}^2\right] = (T-n-m) \left[\trace\left(\Sigma_{\Delta x}+A_\star\Sigma_{\Delta x}A_\star^\top + B_\star\Sigma_{\Delta u}B_\star^\top\right)\right],
$
with $\|\tilde R\|_{\textnormal{F}}^2\sim \sum_{i=1}^n \bar\lambda_i \chi^2(T-n-m)$, and  $\bar \lambda_i$ is the i-th eigenvalue of $\Sigma_{\Delta x}+A_\star\Sigma_{\Delta x}A_\star^\top + B_\star\Sigma_{\Delta u}B_\star^\top +\Sigma_w$.  This result also applies to the class of indistinguishable input attacks previously explained  and motivates why in \cref{example:input_poisoning}  the confidence region in the poisoned case is significantly larger than in the unpoisoned one (due to the larger variance). \else\fi
In addition, observe the following  lemma on the sensitivity of the residuals.
\begin{lemma}[Sensitivity]\label{lemma:attack_sensitivity}
For any fixed attack satisfying the rank condition $\rank \tilde \Psi_-=n+m$, we obtain the following sensitivity on the residuals $\frac{\|\tilde R - R\|_{\textnormal{F}}}{\|R\|_{\textnormal{F}}} \leq \|\Delta \Psi_-\|_2 \|\tilde \Psi_-\|_{\textnormal{F}}^{-1}$.
\end{lemma}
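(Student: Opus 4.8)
The plan is to exploit that the least-squares residuals are orthogonal projections of the one-step-ahead data onto the kernel of the regressor matrix, and to reduce the claim to a perturbation estimate for that projector. Introduce the orthogonal projectors $P\coloneqq I-\Psi_-^\dagger\Psi_-$ onto $\ker\Psi_-$ and $\tilde P\coloneqq I-\tilde\Psi_-^\dagger\tilde\Psi_-$ onto $\ker\tilde\Psi_-$. Since $(A_\LS,B_\LS)=X_+\Psi_-^\dagger$ and $(\tilde A_\LS,\tilde B_\LS)=\tilde X_+\tilde\Psi_-^\dagger$, we have $R=X_+P$ and $\tilde R=\tilde X_+\tilde P$; equivalently, by \cref{eq:fundamental_relationship} and \cref{lemma:compatible_a_b}, $R=W_-P$ and $\tilde R=\tilde W_\star\tilde P$, and I would work with this second form since it replaces the full next-state data by the (poisoned) true-model residuals. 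The normal equations give the orthogonality relations $R\Psi_-^\top=0$ and $\tilde R\tilde\Psi_-^\top=0$, and the rank condition $\rank\tilde\Psi_-=n+m$ lets us use $\tilde\Psi_-^\dagger=\tilde\Psi_-^\top(\tilde\Psi_-\tilde\Psi_-^\top)^{-1}$.

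The key step is to subtract the two orthogonality relations after substituting $\Psi_-=\tilde\Psi_--\Delta\Psi_-$, which gives $(\tilde R-R)\tilde\Psi_-^\top=-R\tilde\Psi_-^\top=-R\Psi_-^\top-R\,\Delta\Psi_-^\top=-R\,\Delta\Psi_-^\top$ and, symmetrically, $(\tilde R-R)\Psi_-^\top=-\tilde R\,\Delta\Psi_-^\top$. The first relation immediately controls the component of $\tilde R-R$ in the row space of $\tilde\Psi_-$: one has $(\tilde R-R)\tilde\Psi_-^\dagger\tilde\Psi_-=-R\,\Delta\Psi_-^\top(\tilde\Psi_-\tilde\Psi_-^\top)^{-1}\tilde\Psi_-=-R\,(\tilde\Psi_-^\dagger\Delta\Psi_-)^\top$, hence $\norm{(\tilde R-R)\tilde\Psi_-^\dagger\tilde\Psi_-}_{\textnormal{F}}\le\norm{R}_{\textnormal{F}}\,\norm{\tilde\Psi_-^\dagger}_2\,\norm{\Delta\Psi_-}_2$. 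It then remains to control the component in $\ker\tilde\Psi_-$, namely $(\tilde R-R)\tilde P$; here I would use $\tilde R\tilde P=\tilde R$ together with $\Psi_-\tilde P=(\tilde\Psi_--\Delta\Psi_-)\tilde P=-\Delta\Psi_-\tilde P$ (as $\tilde\Psi_-\tilde P=0$), so that only $\Delta\Psi_-$ and the data re-enter, bound the difference of the two orthogonal projectors $P-\tilde P$ by a Wedin-type inequality $\norm{P-\tilde P}_2\le\norm{\tilde\Psi_-^\dagger}_2\norm{\Delta\Psi_-}_2$, and combine the two pieces through the Pythagorean identity $\norm{\tilde R-R}_{\textnormal{F}}^2=\norm{(\tilde R-R)\tilde\Psi_-^\dagger\tilde\Psi_-}_{\textnormal{F}}^2+\norm{(\tilde R-R)\tilde P}_{\textnormal{F}}^2$.

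I expect the main obstacle to be precisely the $\ker\tilde\Psi_-$ contribution. One has to show that it is still dominated by $\norm{R}_{\textnormal{F}}\norm{\Delta\Psi_-}_2$ and, in particular, does not drag in the size of the least-squares coefficients $(A_\LS,B_\LS)$ or of $\Delta X_+$ (which is what a blunt expansion would produce); this is exactly where the representation $R=W_-P$ must be used, so that the only data feeding the projector perturbation are the residuals themselves, and it is the most delicate part of the argument. A related subtlety is that the pseudoinverse estimates above naturally deliver the factor $\norm{\tilde\Psi_-^\dagger}_2=\sigma_{\min}(\tilde\Psi_-)^{-1}$, whereas the statement asks for the smaller $\norm{\tilde\Psi_-}_{\textnormal{F}}^{-1}$; obtaining the latter should use, beyond submultiplicativity, that the rows of $R$ are orthogonal to the row space of $\Psi_-$ and hence nearly orthogonal to the dominant singular directions of $\tilde\Psi_-$, so that $\norm{\tilde\Psi_-}_{\textnormal{F}}$ rather than $\sigma_{\min}(\tilde\Psi_-)$ can be carried through to the denominator. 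I would check this normalization step carefully, since it is what makes the stated bound tight.
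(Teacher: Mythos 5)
Your opening step is exactly the paper's: subtract the two normal-equation identities $\tilde R\tilde\Psi_-^\top=0$ and $R\Psi_-^\top=0$ to obtain $(\tilde R-R)\tilde\Psi_-^\top=-R\,\Delta\Psi_-^\top$, and bound the row-space component of $\tilde R-R$ by $\|R\|_{\textnormal{F}}\,\|\tilde\Psi_-^\dagger\|_2\,\|\Delta\Psi_-\|_2$. From there, however, the proposal stalls at precisely the two points you flag, and neither is resolved: (i) the component of $\tilde R-R$ in $\ker\tilde\Psi_-$, and (ii) the replacement of $\|\tilde\Psi_-^\dagger\|_2=\sigma_{\min}(\tilde\Psi_-)^{-1}$ by the much smaller $\|\tilde\Psi_-\|_{\textnormal{F}}^{-1}$. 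These are genuine gaps, not technicalities. For (i), writing $\tilde R=\tilde W_\star(I-\tilde M)$ and $R=W_-(I-M)$ (with $M,\tilde M$ the orthogonal projectors onto the row spaces of $\Psi_-,\tilde\Psi_-$), a direct computation gives $(\tilde R-R)(I-\tilde M)=W_-M(I-\tilde M)+\Delta W_-(I-\tilde M)$, and the second term is \emph{not} dominated by $\|R\|_{\textnormal{F}}\|\Delta\Psi_-\|_2$: take $W_-=0$, so that $R=0$, while $\Delta W_-(I-\tilde M)\neq 0$ for a generic attack — the claimed ratio bound then fails outright, so no Wedin-type projector estimate can rescue this component. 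For (ii), the pseudoinverse estimate naturally delivers $\sigma_{\min}(\tilde\Psi_-)^{-1}$, and the ``near-orthogonality'' heuristic you sketch is not an argument; note that $\|(\tilde\Psi_-^\top)^\dagger\|_{\textnormal{F}}^2=\sum_i\sigma_i(\tilde\Psi_-)^{-2}$, which is in general far larger than $\|\tilde\Psi_-\|_{\textnormal{F}}^{-2}=\bigl(\sum_i\sigma_i(\tilde\Psi_-)^2\bigr)^{-1}$.

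It is worth saying that the paper's own proof steps over exactly these two points: it equates $\tilde R-R$ with $R\,\Delta\Psi_-^\top(\tilde\Psi_-^\top)^\dagger$, i.e., it silently discards the $\ker\tilde\Psi_-$ component, and it then invokes $\|(\tilde\Psi_-^\top)^\dagger\|_{\textnormal{F}}\leq\|\tilde\Psi_-\|_{\textnormal{F}}^{-1}$, which is false. Your caution at both places is therefore well founded, but the proposal as written does not prove the lemma — and the noise-free example above indicates that the lemma cannot be proved in the stated form without additional hypotheses relating $\Delta W_-$ to $R$.
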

Lemma \ref{lemma:MSE_residuals_under_attack} and \ref{lemma:attack_sensitivity} link the problem of maximizing the LS error to that of minimizing the amount of exploration $\|\tilde \Psi_-\|_{\textnormal{F}}$ (as discussed in \cref{subsec:lsestimator}) as well as maximizing the singular values of $\Delta W_-$. Furthermore,  Lemma \ref{lemma:MSE_residuals_under_attack} can be used to formulate a possible detection test on the variance of the residuals. 
In fact, we note that any attack independent of the noise will necessarily increase the variance of the residuals. 
This observation provides us a hint to test the variance of the residuals. It is possible to derive a two-tail test on the variance of the residuals (as long as the user knows has some knowledge on the covariance of the noise) to verify that the data has not been poisoned. 
The user can test whether $ \|\tilde R\|_{\textnormal{F}}^2$ belongs to the range
$
    Q_{(1-\alpha)/2, T-n-M}(\lambda_1,\dots,\lambda_n)\leq  \|\tilde R\|_{\textnormal{F}}^2 \leq Q_{\alpha/2, T-n-M}(\lambda_1,\dots,\lambda_n),
$
where $Q_{x,d}(\lambda_1,\dots)$ is the critical value of the distribution $\sum_{i=1}^n \lambda_i \chi^2(d)$ with significance $x\in(0,1)$. \ifdefined \arxiv 
If the values of $\lambda_1,\dots,\lambda_n$ are known to belong to some confidence region, the user can perform a  Bayesian likelihood ratio test.
\fi
Before we continue with an example, consider that the assumption of independence between $(\Delta U_-,\Delta X)$ and $W_-$ may not be always satisfied: if the attacker has access to the dataset $\data$, then it is  likely that she uses $X$, which depends on $W_-$, to compute the attack vector $(\Delta U_-,\Delta X)$. In other cases, for example, when the attacker has limited capabilities on the dataset and/or the poisoned sensors, the assumption of independence is more likely to hold. Similarly, the assumption holds whenever the attacker is executing an attack that has been computed on a different set of data. Moreover, from  simulations, it seems that this test is still valid to detect a possible adaptive attack.
\begin{example}[Untargeted attack]\label{example:residuals_variance}
Consider an attacker that  maximizes the norm of the residuals $\|\tilde R\|_\frobenius$ as a proxy to maximize $\norm{\begin{bmatrix}\Delta \tilde A_\LS&\Delta \tilde B_\LS\end{bmatrix}}_2$. Let $\delta \geq 0$ be a parameter that limits the amplitude of the poisoning signals, and define $\Delta \tilde W_-\coloneqq \Delta X_+ - A_\LS \Delta X_-- B_\LS \Delta U_-$. Then,  as detailed in the appendix, the adversary can solve the following concave problem to compute an attack:
\begin{equation}\label{eq:untargeted_attack_maximize_mse_error}
        \begin{aligned}
            \max_{\Delta U_{-}, \Delta X} \quad & \trace \left(( \Delta \Tilde W_-+2R)\Delta \Tilde W_-^\top\right)\quad
            \textrm{s.t.} \quad  \|\Delta X\|_{\textnormal{F}} \leq \delta \| X\|_{\textnormal{F}}, \|\Delta U_-\|_{\textnormal{F}} \leq \delta \| U_-\|_{\textnormal{F}}.
        \end{aligned}
\end{equation}
Consider the $4$-dimensional system used in \cite{russo2021poisoning}, affected by white noise with standard deviation $\sigma_w=0.1$. The victim has collected $T=200$ samples using  $u_t\sim\mathcal{N}(0,1)$. In \cref{fig:untargeted_attack_residuals_variance} are shown the results when $(\ref{eq:untargeted_attack_maximize_mse_error})$ is solved using (1) convex-concave programming (CCP), (2) the cross-entropy method \cite{de2005tutorial} and (3) random sampling from a Gaussian distribution (check the appendix for details). Note that the attacks can be easily detected for small values of $\delta$ using the test on the residuals (central plot). On the right plot, observe that, as discussed after corollary \ref{corollary:lower_bound_error_exploration}, the vectors $\theta_\LS$ and $\Delta \tilde \theta_\LS$ tend to align with each other when the attack is impactful.
\end{example}
\begin{figure}[t]
    \centering
    \includegraphics[width=\linewidth]{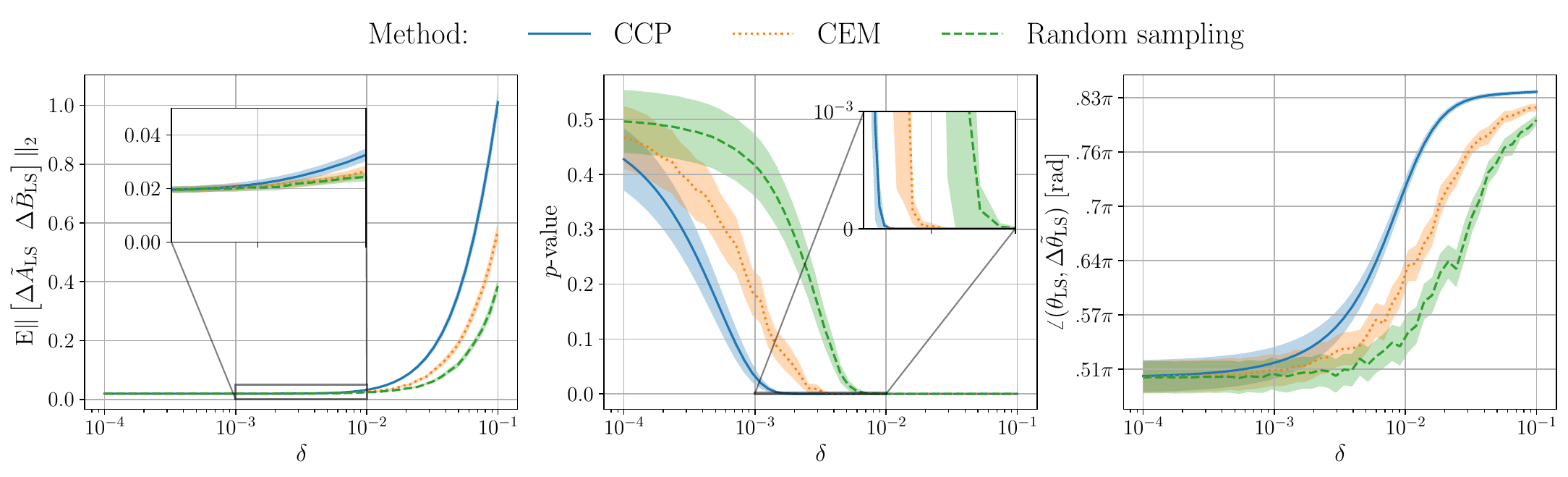}
    \caption{Untargeted attack that maximizes the MSE. The curves indicate an average and its 95\% confidence interval over 100 runs. From left to right : (1)  impact on the LS estimate; (2)  the p-value (right-tail) under the assumption of white noise with $\sigma_w=0.1$; (3) the angle between the unpoisoned estimate $\theta_\LS$ and the  error of the poisoned estimate $\Delta\tilde\theta_\LS$.}
    \label{fig:untargeted_attack_residuals_variance}
\end{figure}
This last example indicates that minimizing the norm of the poisoning signal is not enough to minimize detectability. Even though the poisoned signal and the unpoisoned one are similar, maximizing the MSE greatly affects the distribution of the residuals. Thereby,  it may be more beneficial for the adversary  to directly maximize $\|\begin{bmatrix}\Delta \tilde A_\LS&\Delta \tilde B_\LS\end{bmatrix}\|_2$ (which is a non-convex problem),  while constraining the residuals of the models, to decrease detectability.
A hint comes from the fact that the noise term at time $t$ is $\tilde w_t=w_t+\Delta w_t =w_t + \Delta x_{t+1} - A_\star \Delta x_t - B_\star \Delta u_t$. Since the noise $\tilde w_t$ depends on $\Delta x_{t+1}$, the victim can expect to observe a large value in the correlation of the residuals at lag $1$. This last observation suggests that  an adversary may consider constraining the correlation of the residuals to reduce the detectability.

\paragraph{Correlation tests.}  Consider white process noise  $w_t\sim\mathcal{N}(0,\Sigma_w)$, and let  $\tilde C_\tau = \frac{1}{T}\sum_{t=0}^{T-\tau-1} \tilde R_t\tilde R_{t+\tau}^\top$ be the sample correlation of the residuals at lag $\tau$. 
Under the null hypothesis that the data has not been poisoned, asymptotically we obtain $\sqrt{T} \vect C_\tau \sim \mathcal{N}(0, \Sigma_w \otimes \Sigma_w)$ (from an application of \cite[Lemma 9.A1]{ljung1998system}), from which we  derive the statistics $T \|C_\tau C_0^{-1}\|_{\textnormal{F}}^2 \sim \chi^2(n^2)$. Similarly, following a similar approach as in \cite{hosking1980multivariate}, it is possible to derive the asymptotic Portmanteau statistics to test the whiteness of the residuals $T\sum_{\tau=1}^T\|C_\tau C_0^{-1}\|_{\textnormal{F}}^2 \sim \chi^2(n^2(T-n-m))$. Using these statistics, it is possible to formulate a stealthy attack, as explained in the next section.

\subsection{Stealthy untargeted attack}
\begin{figure}[t]
    \centering
    \includegraphics[width=\linewidth]{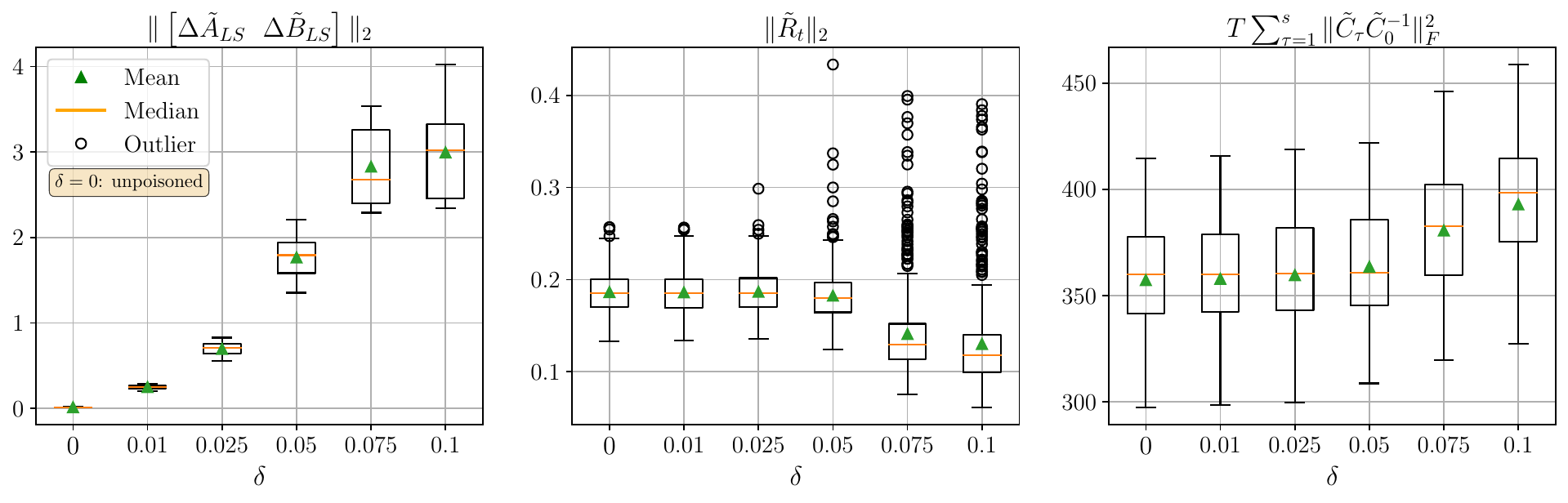}
    \caption{Stealthy attacks on the system used in \cref{example:residuals_variance}. From left to right: distribution over 10 different seeds of the impact on the LS error, $\|\tilde R_t\|_2$ and $T\sum_{\tau=1}^s \|\tilde C_\tau\tilde C_0^{-1}\|_\frobenius^2$.}
    \label{fig:results_stealthy_attack}
\end{figure}
On the basis of the previous findings, we argue that the main quantities of interest to make a poisoning attack stealthy are (1) the norm of the poisoning signals; (2) $\|\tilde R\|_{\textnormal{F}}^2$ the norm of the residuals; (3) $\|\tilde C_\tau \tilde C_0^{-1}\|_{\textnormal{F}}^2,\tau=1,\dots,T-1,$ the norm of the self-normalized correlation terms. Consequently, we propose the following optimization problem to compute poisoning stealthy untargeted attacks:
\begin{equation}\label{eq:optimal_attack}
        \begin{aligned}
            \max_{\Delta U_{-}, \Delta X} \quad & \norm{\begin{bmatrix}\Delta \tilde A_\LS & \Delta  \tilde B_\LS \end{bmatrix}}_2 
            \textrm{ s.t. }  g_i(\data, \Delta U, \Delta X)\leq \delta_i,\quad i=0,2,\dots,s+3,
        \end{aligned}
\end{equation}
with $g_0 = \|\Delta X\|_\textnormal{F}/\|X\|_\textnormal{F}$, $g_1= \|\Delta U\|_\textnormal{F}/\|U\|_\textnormal{F}$, $g_2=\left|1-\|\tilde R\|_{\textnormal{F}}^2/\| R\|_{\textnormal{F}}^2 \right| $, $g_3 = |1-Z_\poisoneddata /Z_\data|$ and $g_{3+\tau} =\left | 1- \|\tilde C_\tau \tilde C_0^{-1}\|_{\textnormal{F}}^2 /\| C_\tau  C_0^{-1}\|_{\textnormal{F}}^2 \right |,\tau=1,\dots, s$, for some $s<T-1$. Intuitively, the constraints limit the relative change of each quantity.
\paragraph{Numerical results.} We applied the attack resulting from (\ref{eq:optimal_attack}) on the system used in \cref{example:residuals_variance}, with $T=500$, $s=25$  and the same value of $\delta$ for all constraints. Since $n=4,m=1$,  there are $2504$ variables to optimize. The optimization problem  is  non-convex, therefore a local solution can be found by means of first-order  methods. Results (see \cref{fig:results_stealthy_attack} and other figures in the appendix) indicate that the resulting poisoning signals can relevantly impact the LS estimate,  while  the statistical indicators (central and right plot in \cref{fig:results_stealthy_attack}) show no evidence of  anomaly. Furthermore, the attack seems to impact have a greater impact on the estimate of $A_\LS$ than that of $B_\LS$ (see also the appendix; for $\delta=0.05$ we obtain $\mathbb{E}\|\Delta \tilde A_\LS\|_2\approx 2$ and $\mathbb{E}\|\Delta \tilde B_\LS\|_2\approx 0.15$). Preliminary results indicate that this effect may be due to the presence of the constraint $g_3$ on $(\mathcal{Z}_\poisoneddata, \mathcal{Z}_\data)$.
Lastly, we observe that for small values of $\delta$ the residuals  do not visually change in a sensible way (refer to the appendix), and an  analysis of the outliers, based on the concept of leverage \cite{kannan2015outlier}, shows  no statistical difference for any values of $\delta$.  These findings suggest that it is possible to devise potentially undetectable poisoning attacks without making use of any sparsity assumption.

\section{Conclusion}
In this work, we have  analyzed poisoning attacks on the data collected from a linear dynamical system affected by process noise. We have focused on the problem of poisoning the least-squares estimate of the underlying dynamical system, which is a quantity used by various data-driven controllers and thus can greatly affect their performance. We have established under which conditions the set  of models compatible with the data includes the true model parameter, and we analyzed the effect of poisoning on the least-squares error.  Based on the analysis of various attack strategies, we have proposed a new stealthy poisoning attack. Results indicate that this attack can relevantly impact the least-squares estimate while being stealthy from a statistical perspective. We conclude that it is possible to craft stealthy attacks that are not necessarily sparse. Possible detection methods include  watermarking  and/or encryption of the data. Future venues of research include analysis of online poisoning attacks; impact and detection of offline poisoning attacks; recovery of the original system matrices based on the set of compatible models.

\section*{Acknowledgments}
This work was supported by the Swedish Foundation for Strategic Research through the CLAS project (grant RIT17-0046).  In addition, the author is grateful to Prof. Alexandre Proutiere for his unwavering support and for providing the opportunity to work on this project.
\ifdefined \arxiv \newpage \else\fi
\bibliography{refs}
\ifdefined \arxiv
\newpage
\appendix
\section{Appendix}
\subsection{Proofs}
\begin{proof}[Proof of Lemma \ref{lemma:compatible_a_b}]
The claim of compatibility follows from the text in the main part of this manuscript (alternatively see also  \cite{koch2020verifying}). To prove the latter claim, assume first that $(A_\star, B_\star)\in \Sigma_\poisoneddata$. Then $\tilde W_-$ is  given by
\[
\tilde W_- = \begin{bmatrix}
    I_n & -A_\star & -B_\star
    \end{bmatrix} \begin{bmatrix}
    \tilde X_+ \\ \tilde X_- \\ \tilde U_- 
    \end{bmatrix}= W_- +\begin{bmatrix}
    I_n & -A_\star & -B_\star
    \end{bmatrix} \begin{bmatrix}
    \Delta X_+ \\ \Delta X_- \\ \Delta U_- 
    \end{bmatrix}=\tilde W_\star
\]
 where the latter equality follows from \cref{eq:fundamental_relationship}.  Since $(A_\star, B_\star)\in \Sigma_\poisoneddata$, then $\tilde W_-\in \mathcal{W}_\poisoneddata\subseteq \mathcal{W}^T$.
 
 Consider the reverse direction, and assume that  $\tilde W_\star\in \mathcal{W}^T$. Then $\tilde W_\star
\in \mathcal{W}_\poisoneddata$ since
\begin{align*}
     \tilde W_\star \tilde G^\perp &=\left( W_- +\begin{bmatrix}
    I_n & -A_\star & -B_\star
    \end{bmatrix} \begin{bmatrix}
    \Delta X_+ \\ \Delta X_- \\ \Delta U_- 
    \end{bmatrix}\right)\begin{bmatrix}\tilde X_+\\\tilde X_-\\ \tilde U_-
 \end{bmatrix}^\perp,\\
 &=\left( \begin{bmatrix}
    I_n & -A_\star & -B_\star
    \end{bmatrix} \left(\begin{bmatrix}
    X_+ \\  X_- \\  U_- 
    \end{bmatrix} + \begin{bmatrix}
    \Delta X_+ \\ \Delta X_- \\ \Delta U_- 
    \end{bmatrix}\right)\right)\begin{bmatrix}\tilde X_+\\\tilde X_-\\ \tilde U_-
 \end{bmatrix}^\perp,\\
 &=\begin{bmatrix}
    I_n & -A_\star & -B_\star
    \end{bmatrix} \begin{bmatrix}\tilde X_+\\
    \tilde X_- \\ \tilde U_- 
    \end{bmatrix}\begin{bmatrix}\tilde X_+\\\tilde X_-\\ \tilde U_-
 \end{bmatrix}^\perp \\ &=0.
\end{align*}
From which follows that $(A_\star, B_\star)\in \Sigma_\poisoneddata$.
\end{proof}
\begin{proof}[Proof of lemma \ref{lemma:estimate_delta}]
From lemma \ref{lemma:compatible_a_b} we know that for a poisoned dataset $\poisoneddata=(\tilde U_-, \tilde X)$  the corresponding noise realization is given by $\tilde W_\star$. Therefore, using lemma \ref{lemma:ls_error} we obtain the result. Alternatively, note that the LS estimator $(\hat{\tilde A},\hat{\tilde B})$ of $\poisoneddata$  satisfies :
\[
\begin{bmatrix}
 \hat{\tilde A} & \hat{\tilde B}
\end{bmatrix}\begin{bmatrix}
\tilde X_-\\ \tilde U_-
\end{bmatrix} = \tilde X_+.
\]
Let $\Delta A = \hat{\tilde {A}}-A_\star$, and similarly define $\Delta B$. Then  the previous expression becomes
\[
\begin{bmatrix}
 A_\star+\Delta A & B_\star+\Delta B
\end{bmatrix} \begin{bmatrix}
 X_-+\Delta X_-\\  U_-+\Delta U_-
\end{bmatrix} = X_+  +\Delta X_+,
\]
from which we obtain
$
\begin{bmatrix}
 A_\star & B_\star
\end{bmatrix} \begin{bmatrix}
\Delta X_-\\ \Delta U_-
\end{bmatrix}+
\begin{bmatrix}
\Delta A &\Delta B
\end{bmatrix} \begin{bmatrix}
\tilde X_-\\ \tilde U_-
\end{bmatrix} = W_- +\Delta X_+.$
The result follows by using the expression of $\tilde W_\star$, thus
\[
\begin{bmatrix}
\Delta A &\Delta B
\end{bmatrix} \begin{bmatrix}
\tilde X_-\\ \tilde U_-
\end{bmatrix} = W_- +\tilde W_\star.
\]
The second part follows  from the fact that $\norm{\Delta \tilde \theta_\LS}_2 = \norm{\vect (\Delta \tilde A_\LS, \Delta \tilde B_\LS)}_2 = \norm{(\Delta \tilde A_\LS, \Delta \tilde B_\LS)}_{\textnormal{F}} $, thus
    \begin{align*}
        \|\Delta \tilde \theta_\LS\|_2^2 &= \|(\Delta \tilde A_\LS, \Delta \tilde B_\LS)\|_\frobenius^2,\\
        &= \trace\left((\tilde \Psi_-\tilde \Psi_-^\top)^{-1} \tilde \Psi_- \tilde W_\star^\top \tilde W_\star \tilde \Psi_-^\top (\tilde \Psi_-\tilde \Psi_-^\top)^{-1} \right),\\
        &= \trace\left( \tilde W_\star^\top \tilde W_\star \tilde \Psi_-^\top (\tilde \Psi_-\tilde \Psi_-^\top)^{-1}(\tilde \Psi_-\tilde \Psi_-^\top)^{-1} \tilde \Psi_- \right),\\
        &\stackrel{(a)}{\leq} \sigma_{\text{max}}(\tilde W_\star)^2 \trace\left(\tilde \Psi_-^\top (\tilde \Psi_-\tilde \Psi_-^\top)^{-1}(\tilde \Psi_-\tilde \Psi_-^\top)^{-1} \tilde \Psi_-  \right),\\
        &= \sigma_{\text{max}}(\tilde W_\star)^2 \trace\left((\tilde \Psi_-\tilde \Psi_-^\top)^{-1} \right),\\
        &\stackrel{(b)}{\leq} (\sigma_{\text{max}}(W_-) + \sigma_{\text{max}}(\Delta W_-))^2 / \left ( \sum_{t=0}^{t-1} \|\tilde \psi_t\|_2^2 \right).
    \end{align*}
    In the above proof (a) follows from the Von Neumann's trace inequality $\trace(XY) \leq \sum_{i} \sigma_i(X)\sigma_i(Y) \leq \|Y\|_2\trace(X)$, where $Y=W_\star^\top \tilde W_\star$ and $X=\tilde \Psi_-^\top (\tilde \Psi_-\tilde \Psi_-^\top)^{-1}(\tilde \Psi_-\tilde \Psi_-^\top)^{-1} \tilde \Psi_-$; (b) follows from $\|\tilde W_\star\|_2 = \| W_-+\Delta W_-\|_2 \leq \| W_-\|_2 +\| \Delta W_-\|_2 $ and the fact that $\tilde \Psi_-\tilde \Psi_-^\top$ it's a symmetric positive-definite matrix, thus diagonalizable as $\tilde \Psi_-\tilde \Psi_-^\top = VDV^\top$, and consequently $\trace((\tilde \Psi_-\tilde \Psi_-^\top)^{-1}) = \trace(D^{-1})= \trace(D)^{-1}$.
    The reverse direction follows by noting that, similarly, $\|XY\|_\frobenius \geq \sigma_{\text{min}}(Y) \|X\|_\frobenius .$
\end{proof}
\begin{proof}[Proof of corollary \ref{corollary:lower_bound_error_exploration}.]
Let $\{v_i\}_{i=1}^{n(n+m)}$ be an orthonormal basis of $\mathbb{R}^{n(n+m)}$ with $v_1 = \theta_\LS/ \|\theta_\LS\|_2$ and let $V=\begin{bmatrix}v_1 &v_2&\dots &v_{n(n+m)}\end{bmatrix}$. Then, we verify that
\[
(\Psi_-\Psi_-^\top \otimes I_n)=\sum_{t=0}^{T-1}(\tilde\psi_t\tilde\psi_t^\top \otimes I_n) = \sum_{t=0}^{T-1}(\tilde\psi_t \otimes I_n) (\tilde\psi_t\otimes I_n)^\top = VDV^\top,
\]
for some $D$ where $(D)_{kk}=\sum_{t=0}^{T-1}v_k^\top  (\tilde\psi_t \otimes I_n) (\tilde\psi_t\otimes I_n)^\top v_k$. Using that $V_k =\vect^{-1}(v_k)$, where $V_k$ is an $n\times n+m$ matrix, we find
\[
(\tilde\psi_t\otimes I_n)^\top v_k = (\tilde\psi_t^\top \otimes I_n)v_k = V_k \tilde \psi_t\Rightarrow (D)_{kk} = \sum_{t=0}^{T-1} \|V_k \tilde \psi_t\|_2^2 = \|V_k \tilde \Psi_-\|_{\textnormal{F}}^2.
\]
Observing that $\Delta\theta_\LS = \vect\left(\tilde W_\star \Psi_-^\top (\Psi_-\Psi_-^\top)^{-1}\right) = (\Psi_-\Psi_-^\top \otimes I_n)^{-1}\vect(\tilde W_\star \Psi_-^\top)$, and that $\vect(\tilde W_\star \Psi_-^\top) = (\Psi_-\otimes I_n) \vect(\tilde W_\star)$, we conclude
\begin{align*}
    v_k^\top \Delta \theta_\LS &= v_k^\top(\Psi_-\Psi_-^\top \otimes I_n)^{-1}\vect(\tilde W_\star \Psi_-^\top),\\
    &= v_k^\top VD^{-1}V^\top \vect(\tilde W_\star \Psi_-^\top),\\
    &= (D)_{kk}^{-1} v_k^\top \vect(\tilde W_\star \Psi_-^\top),\\
    &= (D)_{kk}^{-1} v_k^\top (\Psi_-\otimes I_n) \vect(\tilde W_\star),\\ 
    &= (D)_{kk}^{-1} \vect(V_k \Psi_-)^\top  \vect(\tilde W_\star)
\end{align*}
Letting $\alpha$ be the angle between $\vect(V_k \Psi_-)$ and $\vect(\tilde W_\star)$ we derive
\[
|\vect(V_k \Psi_-)^\top  \vect(\tilde W_\star)|^2 = \|V_k \Psi_-\|_{\textnormal{F}}^2 \|\tilde W_\star\|_{\textnormal{F}}^2 \cos(\alpha)^2 \geq \|V_k \Psi_-\|_{\textnormal{F}}^2  \cos(\alpha)^2 (\sigma_{\text{min}}(W_-) +\sigma_{\text{min}}(\Delta W_-))^2
\]
Finally, we conclude that
\[
\sqrt{( v_k^\top \Delta \theta_\LS)^2} \geq \frac{|\cos(\alpha)| (\sigma_{\text{min}}(W_-) +\sigma_{\text{min}}(\Delta W_-))}{\|V_k \Psi_-\|_{\textnormal{F}}}.
\]

\end{proof}
\begin{proof}[Proof of lemma \ref{lemma:input_attack_indistinguishable_asymptotics}]
Define the covariance matrix
\[S(T) = \frac{1}{T}\begin{bmatrix}
 X_- \\ \tilde U_-
\end{bmatrix}\begin{bmatrix}
 X_-^\top & \tilde U_-^\top
\end{bmatrix}= \frac{1}{T}\left(\sum_{t=0}^{T-1}\begin{bmatrix}
 x_t x_t^\top &  x_t  a_t^\top \\
 a_t  x_t^\top &  a_t  a_t^\top
\end{bmatrix}\right).\]
Since $(a_t, x_t)$ are independent,  it follows that $S(T)$ for $T\to\infty$ converges to a block-diagonal matrix 
\[\lim_{T\to\infty} S(T)  = S=\begin{bmatrix}
\Sigma_x & 0\\ 0 &\Sigma_u
\end{bmatrix},
\]
whose  elements are the covariance matrix $\Sigma_x$  of the state $x_t$ (at stationarity),  and the covariance $\Sigma_u$ of $a_t$. Now, from lemma \ref{lemma:estimate_delta} observe that
\[
\begin{bmatrix}
 \Delta A & \Delta B
\end{bmatrix}S(T) = \frac{1}{T}\sum_{t=0}^{T-1} \left(w_t  - B_\star (-u_t+a_t) \right)\begin{bmatrix}
 x_t^\top &  a_t^\top
\end{bmatrix}
\]
Then, we see that
$
\frac{1}{T}\sum_{t=0}^{T-1} (w_t- B_\star (-u_t +a_t)) a_t^\top \to -B_\star \Sigma_u.
$
Consequently, as $T\to\infty$, the error $\Delta B$ converges to $-B_\star
$ w.p. $1$.
\end{proof}
\begin{proof}[Proof of lemma \ref{lemma:MSE_residuals_under_attack}]
Using lemma \ref{lemma:residuals} we have that $\tilde R = \tilde W_\star F_T$, where $F_T=(I_T - \tilde M)$ and $\tilde M=\begin{bmatrix}
    \tilde X_- \\ \tilde U_- 
    \end{bmatrix}^\dagger \begin{bmatrix}
    \tilde X_- \\ \tilde U_- 
    \end{bmatrix}$ is a $T\times T$ idempotent matrix. From this result it also follows that in absence of an attack $\vect R = \vect( W_- (I-M)) = ((I-M)^\top \otimes I_n) \vect W_-$, and thus $\vect R \sim \mathcal{N}(0,  ((I-M)^\top \otimes I_n)(\Sigma_w \otimes I_T) ((I-M)^\top \otimes I_n)^\top)$.

    From the rank condition on $\poisoneddata$, we find that  $I_T-M$ has $T-(n+m)$ eigenvalues that are $1$, and $n+m$ that are 0. 
    Consequently, since $F_TF_T^\top = F_T$ we find
    \[
        \|\tilde R\|_{\textnormal{F}}^2 = \trace(F_T^\top \tilde W_\star^\top \tilde W_\star F_T ) =\trace(\tilde W_\star F_T \tilde W_\star^\top ).
    \]
    Then $\|\tilde R\|_{\textnormal{F}}^2 = \trace(W_- F_TW_-^\top) +\trace(\Delta W_- F_T\Delta W_-^\top) +2\trace( W_- F_T \Delta W_-^\top)$. From the fact that $F_T$ is symmetric, it can be diagonalized by an orthogonal matrix $U_T$ s.t. $F_T =U_T^\top D_T U_T$, where $D_T$ has $T-(n+m)$ ones and $n+m$ zeros along the diagonal. Consequently $\trace(W_- F_TW_-^\top) = \sum_{t=0}^{T-n-m-1} \|\bar w_t\|_2^2$, where $\bar w_t$ are the components of $W_-U_T^\top$, which are also independent and normally distributed according to $\mathcal{N}(0,\Sigma_w)$. Therefore $W_- F_TW_-^\top $ is distributed according to a  Wishart distribution with $T-n-m$ degrees of freedom, and scale matrix $\Sigma_w$. Let $\lambda_i$ be the $i$-th eigenvalue of $\Sigma$, then $\trace(W_- F_TW_-^\top) \sim \sum_{i=1}^n \lambda_i \chi^2(T-n-m)$.  
    We also derive that  $\mathbb{E}[\trace(W_- F_TW_-^\top)] = (T-n-m)\trace(\Sigma_w)$.  
    
    Continuing the proof,  from the assumption of independence we find  $\mathbb{E}[\trace( W_- F_T \Delta W_-^\top)] =0$. Finally, observe the following upper bound $\trace(\Delta W_- F_T\Delta W_-^\top) = \|\Delta W_- F_T\|_{\textnormal{F}}^2 \leq \|\Delta W_-\|_{\textnormal{F}}^2 = \sum_{t=0}^{T-1}\| \Delta x_{t+1}- A_\star \Delta x_t - B_\star \Delta u_t\|_2^2$. Taking the expectation of this  term concludes the proof.
\end{proof}
\begin{proof}[Proof of lemma \ref{lemma:attack_sensitivity}]
From the orthogonality of the LS estimator we have that
\[
\tilde R\tilde\Psi_-^\top =0 \hbox{ and }  R\Psi_-^\top=0
\]
therefore $\tilde R\tilde\Psi_-^\top =0 - R\Psi_-^\top=0\Rightarrow (\tilde R - R)\tilde \Psi_-^\top = R\Delta\Psi_-^\top$. Hence, we get the following set of inequalities (which follow from a similar bound provided in lemma \ref{lemma:estimate_delta}):
\begin{align*}
\|(\tilde R - R)\|_\textnormal{F}^2 &= \|R \Delta\Psi_-^\top (\tilde \Psi_-^\top)^\dagger\|_\textnormal{F}^2,\\
&\leq \|R\|_{\textnormal{F}}^2 \| \Delta\Psi_-^\top (\tilde \Psi_-^\top)^\dagger\|_\textnormal{F}^2,\\
&\leq \|R\|_{\textnormal{F}}^2 \sigma_{\text{max}}(\Delta \Psi_-)^2 \| (\tilde \Psi_-^\top)^\dagger\|_\textnormal{F}^2,\\
&\leq \|R\|_{\textnormal{F}}^2 \frac{\sigma_{\text{max}}(\Delta \Psi_-)^2}{ \|\tilde \Psi_-\|_\textnormal{F}^2}.
\end{align*}
\end{proof}

\begin{lemma}\label{lemma:ls_error}
The error made by the LS estimate given a dataset $\data=(U_-,X)$ is \begin{equation}
\begin{bmatrix}
 \hat A - A_\star &\hat B- B_\star
\end{bmatrix}= W_- \begin{bmatrix}
 X_-\\ U_-
\end{bmatrix}^\dagger.\end{equation} Furthermore, the LS estimate is compatible with the data, i.e. $(\hat A,\hat B)\in \Sigma_\data$, with $\tilde W_-=0$.
\end{lemma}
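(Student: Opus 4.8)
The plan is to unwind the definition of the least-squares estimator and substitute the dynamics. Writing $\Psi_- = \begin{bmatrix}X_- \\ U_-\end{bmatrix}$, by definition $\begin{bmatrix}\hat A & \hat B\end{bmatrix} = X_+\Psi_-^\dagger$, while stacking \cref{eq:system} along the trajectory gives $X_+ = \begin{bmatrix}A_\star & B_\star\end{bmatrix}\Psi_- + W_-$. Substituting, $\begin{bmatrix}\hat A & \hat B\end{bmatrix} = \begin{bmatrix}A_\star & B_\star\end{bmatrix}\Psi_-\Psi_-^\dagger + W_-\Psi_-^\dagger$. Here I would invoke \cref{assumption:rank_condition}: since $\Psi_-\in\mathbb{R}^{(n+m)\times T}$ has full row rank $n+m$, it admits a genuine right inverse, $\Psi_-^\dagger = \Psi_-^\top(\Psi_-\Psi_-^\top)^{-1}$, so that $\Psi_-\Psi_-^\dagger = I_{n+m}$. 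Hence $\begin{bmatrix}\hat A & \hat B\end{bmatrix} = \begin{bmatrix}A_\star & B_\star\end{bmatrix} + W_-\Psi_-^\dagger$, which is exactly the claimed identity $\begin{bmatrix}\hat A - A_\star & \hat B - B_\star\end{bmatrix} = W_-\begin{bmatrix}X_-\\ U_-\end{bmatrix}^\dagger$.

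For the compatibility claim I would feed the estimate into the fundamental relationship \cref{eq:fundamental_relationship}: the noise sequence it induces is the least-squares residual $R \coloneqq X_+ - \begin{bmatrix}\hat A & \hat B\end{bmatrix}\Psi_- = W_-(I_T - \Psi_-^\dagger\Psi_-)$. To conclude $(\hat A,\hat B)\in\Sigma_\data$ it then suffices to check $R\in\mathcal{W}_\data$. The defining condition $(X_+ - R)\Psi_-^\perp = 0$ is immediate, since $X_+ - R = \begin{bmatrix}\hat A & \hat B\end{bmatrix}\Psi_-$ lies in the row space of $\Psi_-$ and $\Psi_-\Psi_-^\perp = 0$ by construction of $\Psi_-^\perp$ as a basis of $\ker\Psi_-$; and $R\in\mathcal{W}^T$ holds in the standing case $\mathcal{W}=\mathbb{R}^n$. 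Whenever the data already admits an exact linear explanation — in particular in the noise-free case $W_- = 0$ — the projector $I_T - \Psi_-^\dagger\Psi_-$ annihilates $W_-$, so $R = 0$ and $\tilde W_- = 0$.

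The computation is elementary; the one place that genuinely needs the hypotheses is the step $\Psi_-\Psi_-^\dagger = I_{n+m}$, which fails without \cref{assumption:rank_condition} (there $\Psi_-\Psi_-^\dagger$ is only the orthogonal projector onto $\image\Psi_-$ and the clean formula breaks). The other subtlety worth stating explicitly is that the noise realisation witnessing membership in $\Sigma_\data$ is the residual $R$ rather than the true $W_-$: it is precisely the normal-equation orthogonality $R\Psi_-^\top = 0$ of the least-squares fit that makes $X_+ - R$ sit in the row space of $\Psi_-$ and hence places $R$ inside $\mathcal{W}_\data$.
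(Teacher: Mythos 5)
Your derivation of the error formula is essentially the paper's own: substitute $X_+ = \begin{bmatrix}A_\star & B_\star\end{bmatrix}\Psi_- + W_-$ into $X_+\Psi_-^\dagger$ and use $\Psi_-\Psi_-^\dagger = I_{n+m}$, which holds by the full row rank of $\Psi_-$ from \cref{assumption:rank_condition}; you are right that this is the one step where the rank condition genuinely enters. For the compatibility claim, however, your route differs from the paper's and is the more careful one. The paper's proof writes $A_\star X_- + B_\star U_- + W_- = \hat A X_- + \hat B U_- + \tilde W_-$ and concludes $W_- = W_- + \tilde W_-$, hence $\tilde W_- = 0$; that implication silently assumes $\hat A X_- + \hat B U_- = A_\star X_- + B_\star U_-$, i.e.\ $W_-\Psi_-^\dagger\Psi_- = W_-$, which holds only when the rows of $W_-$ lie in the row space of $\Psi_-$ (for instance in the noise-free case). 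As you observe, the noise realisation induced by $(\hat A,\hat B)$ is the residual $R = W_-(I_T - \Psi_-^\dagger\Psi_-)$, and your verification that $R\in\mathcal{W}_\data$ — namely $(X_+ - R)\Psi_-^\perp = \begin{bmatrix}\hat A & \hat B\end{bmatrix}\Psi_-\Psi_-^\perp = 0$ together with $R\in\mathcal{W}^T$ in the standing case $\mathcal{W}=\mathbb{R}^n$ — is the correct way to conclude $(\hat A,\hat B)\in\Sigma_\data$; this is exactly the orthogonality-of-residuals argument the paper itself uses later in \cref{lemma:attack_sensitivity} and \cref{lemma:residuals}. The trade-off is that you thereby prove a slightly weaker, but accurate, version of the lemma's final clause: $\tilde W_- = 0$ only when the projector $I_T - \Psi_-^\dagger\Psi_-$ annihilates $W_-$, not unconditionally as the statement reads.
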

\begin{proof}
The proof of the first claim follows from the following sequence of equalities:  
\begin{align*}
\begin{bmatrix}
 \hat A  &\hat B
\end{bmatrix} = X_+ \begin{bmatrix}
X_-\\ U_-
\end{bmatrix}^\dagger = \left( W_- + \begin{bmatrix}A_\star & B_\star\end{bmatrix}\begin{bmatrix}X_-\\ U_-\end{bmatrix}\right) \begin{bmatrix}
X_-\\ U_-
\end{bmatrix}^\dagger = W_-\begin{bmatrix}
X_-\\ U_-
\end{bmatrix}^\dagger + \begin{bmatrix}
 A_\star &B_\star
\end{bmatrix}.
\end{align*}
To verify if $(\hat A,\hat B)\in \Sigma_\data$, note that the following condition needs to hold $X_+=\hat A X_-+ \hat B U_- + \tilde W_-$ for some $\tilde W_-\in \mathcal{W}_\data$. In light of the previous result, the condition can be rewritten as $A_\star X_- + B_\star U_- +W_- = \hat A X_-+ \hat B U_- +\tilde W_- \Rightarrow W_-= W_-+\tilde W_-$. Therefore, it must be $\tilde W_-=0$. Since $0\in \mathcal{W}_\data$ we conclude that the LS-estimate belongs to $\Sigma_\data$.
\end{proof}

\begin{lemma}\label{lemma:residuals}
Consider a poisoned dataset $\poisoneddata$ and its LS estimate  $(\tilde A_\LS, \tilde B_\LS)$ . Then the residuals $\tilde R = \begin{bmatrix}
 \tilde R_0 &\tilde R_1 &\dots &\tilde R_{T-1}
\end{bmatrix}=\tilde X_+ - \begin{bmatrix}\tilde A_\LS & \tilde B_\LS\end{bmatrix}\begin{bmatrix}
    \tilde X_- \\ \tilde U_- 
    \end{bmatrix}$ satisfy 
    \begin{equation}
        \tilde R = \tilde W_\star (I-\tilde M),
    \end{equation}
    where $\tilde M=\begin{bmatrix}
    \tilde X_- \\ \tilde U_- 
    \end{bmatrix}^\dagger \begin{bmatrix}
    \tilde X_- \\ \tilde U_- 
    \end{bmatrix}$ is a $T\times T$ idempotent matrix.
\end{lemma}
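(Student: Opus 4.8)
The plan is to reduce the statement to the single Moore--Penrose identity $\tilde\Psi_-\tilde\Psi_-^\dagger\tilde\Psi_- = \tilde\Psi_-$ together with the closed form of $\tilde W_\star$ already established in Lemma~\ref{lemma:compatible_a_b}. First I would write the poisoned LS estimate explicitly as $\begin{bmatrix}\tilde A_\LS & \tilde B_\LS\end{bmatrix} = \tilde X_+\tilde\Psi_-^\dagger$, where I recall that $\tilde\Psi_-$ is the vertical stack of $\tilde X_-$ over $\tilde U_-$, so that by the very definition of the residual matrix,
\[
\tilde R \;=\; \tilde X_+ - \tilde X_+\tilde\Psi_-^\dagger\tilde\Psi_- \;=\; \tilde X_+(I_T - \tilde M), \qquad \tilde M := \tilde\Psi_-^\dagger\tilde\Psi_-.
\]

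Next I would substitute the decomposition of $\tilde X_+$ in terms of the effective noise realisation. By Lemma~\ref{lemma:compatible_a_b} (using the fundamental relationship \cref{eq:fundamental_relationship}, i.e.\ $W_- = X_+ - A_\star X_- - B_\star U_-$), one has $\tilde W_\star = \tilde X_+ - \begin{bmatrix}A_\star & B_\star\end{bmatrix}\tilde\Psi_-$, equivalently $\tilde X_+ = \tilde W_\star + \begin{bmatrix}A_\star & B_\star\end{bmatrix}\tilde\Psi_-$. Plugging this into the display above gives
\[
\tilde R \;=\; \tilde W_\star(I_T - \tilde M) + \begin{bmatrix}A_\star & B_\star\end{bmatrix}\tilde\Psi_-(I_T - \tilde M),
\]
and the second summand vanishes because $\tilde\Psi_-(I_T - \tilde M) = \tilde\Psi_- - \tilde\Psi_-\tilde\Psi_-^\dagger\tilde\Psi_- = 0$. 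This leaves exactly $\tilde R = \tilde W_\star(I_T - \tilde M)$, which is the claimed identity.

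It remains to record that $\tilde M$ is idempotent: in general $\tilde\Psi_-^\dagger\tilde\Psi_-$ is the orthogonal projector onto the row space of $\tilde\Psi_-$, and under the rank condition $\operatorname{rank}\tilde\Psi_- = n+m$ one may write $\tilde\Psi_-^\dagger = \tilde\Psi_-^\top(\tilde\Psi_-\tilde\Psi_-^\top)^{-1}$, whence $\tilde M^2 = \tilde\Psi_-^\top(\tilde\Psi_-\tilde\Psi_-^\top)^{-1}\tilde\Psi_-\tilde\Psi_-^\top(\tilde\Psi_-\tilde\Psi_-^\top)^{-1}\tilde\Psi_- = \tilde M$. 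There is no genuine obstacle in this argument; the only place that warrants care is the sign bookkeeping in passing from $\tilde W_\star = W_- + \Delta W_-$ to $\tilde W_\star = \tilde X_+ - A_\star\tilde X_- - B_\star\tilde U_-$, which is precisely why I would invoke Lemma~\ref{lemma:compatible_a_b} rather than re-derive the relation in place.
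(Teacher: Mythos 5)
Your proof is correct and rests on the same two ingredients as the paper's: the identity $\tilde X_+ = \begin{bmatrix}A_\star & B_\star\end{bmatrix}\tilde\Psi_- + \tilde W_\star$ and the projector structure of $\tilde M$. The only organizational difference is that you factor out $(I_T-\tilde M)$ from $\tilde X_+$ first and kill the $\begin{bmatrix}A_\star & B_\star\end{bmatrix}\tilde\Psi_-$ term via $\tilde\Psi_-(I_T-\tilde M)=0$, whereas the paper routes the same computation through Lemma~\ref{lemma:estimate_delta}; this is a cosmetic (arguably cleaner) reshuffling rather than a different argument.
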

\begin{proof}
The proof relies on using lemma \ref{lemma:estimate_delta} and the expression of $\tilde W_\star$:
\begin{align*}
     \tilde R &= \tilde X_+ - \begin{bmatrix}\tilde A_\LS & \tilde B_\LS\end{bmatrix}\begin{bmatrix}
    \tilde X_- \\ \tilde U_- 
    \end{bmatrix},\\
    &\stackrel{(a)}{=} \tilde X_+ - \begin{bmatrix}A_\star & B_\star\end{bmatrix}\begin{bmatrix}
    \tilde X_- \\ \tilde U_- 
    \end{bmatrix} - \tilde W_\star \tilde M,\\
    &= X_+ +\Delta X_+- \begin{bmatrix}A_\star & B_\star\end{bmatrix}\begin{bmatrix}
     X_- \\  U_- 
    \end{bmatrix}-\begin{bmatrix}A_\star & B_\star\end{bmatrix}\begin{bmatrix}
     \Delta X_- \\  \Delta U_- 
    \end{bmatrix} - \tilde W_\star \tilde M,\\
    &\stackrel{(b)}{=} X_+ - \begin{bmatrix}A_\star & B_\star\end{bmatrix}\begin{bmatrix}
     X_- \\  U_- 
    \end{bmatrix}  - W_- + \tilde W_\star (I-\tilde M),\\
    &\stackrel{(c)}{=}W_-  - W_- + \tilde W_\star (I-\tilde M) = \tilde W_\star (I-\tilde M).
\end{align*}
(a) is an application of \ref{lemma:estimate_delta}; (b) follows from the fact that $\tilde W_\star -W_- = \Delta W_- = \Delta X_+ - A_\star \Delta X_--B_\star \Delta U_-$; (c) is a consequence of \cref{eq:fundamental_relationship}. To see that $M$ is an idempotent matrix it is sufficient to note that $\tilde M = \begin{bmatrix} \tilde X_-^\top & \tilde U_-^\top\end{bmatrix}\begin{bmatrix}
    \tilde X_- \tilde X_-^\top  & \tilde X_- \tilde U_-^\top \\ \tilde U_- \tilde X_-^\top & \tilde U_-\tilde U_-^\top
    \end{bmatrix}^{-1} \begin{bmatrix}
    \tilde X_- \\ \tilde U_- 
    \end{bmatrix}$.
\end{proof}
\newpage
\subsection{Numerical results}
In this section, we illustrate the details of the numerical results presented in the paper.
Please, find all the code at the following link: \url{https://github.com/rssalessio/data-poisoning-linear-systems}.
\subsubsection{Input poisoning attack}

In \cref{example:input_poisoning} we explored the effects of input poisoning. The choice of a scalar system allows to visualize the confidence regions of the parameters for different number of samples. As a reminder, the true system is described by the equation
\begin{equation}
    x_{t+1}=\underbrace{0.7}_{a} x_t+\underbrace{0.5}_{b} u_t+w_t,\quad w_t\sim \mathcal{N}(0,\sigma^2).
\end{equation}
In the unpoisoned case, the LS estimate is distributed according to a Gaussian distribution, of covariance $\sigma^2( \Psi_-  \Psi_-^\top)^{-1} $. When the true variance is unknown, an estimate $\hat \sigma_T$ can be computed from the MSE using $T$ samples. Then, for $\theta_\LS=(a_\LS, b_\LS)$, the quantity
\begin{equation}
    \frac{1}{\hat \sigma_T} (\theta_\LS - \theta_\star)^\top (\Psi_-\Psi_-^\top)(\theta_\LS - \theta_\star) \in F(2,T-2)
\end{equation}
is distributed according to a $F(2,T-2)$ distribution \cite{ljung1998system}, which can be used to derive a confidence region for $(a,b)$.
In case the data is poisoned, the estimate $\hat \sigma_T$ will be different from the true value $\sigma$. In fact, for this type of attack where the true signal $u_t$ is replaced by another signal with the same statistical properties, the variance will be higher, as explained in section \ref{sec:residual_analysis}. In conclusion, we expect to obtain bigger confidence regions in the poisoned case, which is indeed the result that we obtain from numerical results (see also the figures in \ref{fig:input_attack_appendix}).

\begin{table}[h]
\centering
\begin{tabular}{l||cc|cc|cc} 
\toprule
\multicolumn{1}{c||}{\multirow{2}{*}{$\sigma/T$}} & \multicolumn{2}{c|}{$T=30$}                                                                              & \multicolumn{2}{c|}{$T=100$}                                                                             & \multicolumn{2}{c}{$T=1000$}                                                                             \\ 
\cline{2-7}
\multicolumn{1}{c||}{}                            & \multicolumn{1}{c}{$\mathcal{Z}_{\mathcal{D}}$} & \multicolumn{1}{c|}{$\mathcal{Z}_{\poisoneddata}$} & \multicolumn{1}{c}{$\mathcal{Z}_{\mathcal{D}}$} & \multicolumn{1}{c|}{$\mathcal{Z}_{\poisoneddata}$} & \multicolumn{1}{c}{$\mathcal{Z}_{\mathcal{D}}$} & \multicolumn{1}{c}{$\mathcal{Z}_{\poisoneddata}$}  \\ 
\midrule
$\sigma=0.1$    & 722.98  & 0.56    & 2272.35   & 0.08  & 25273.61   & 0.004   \\
$\sigma=1$    & 7.35  & 0.45    & 29.05   & 0.62  & 233.35   & 0.15   \\
$\sigma=10$    & 0.07  & 0.16    & 0.82   & 0.74  & 0.84   & 0.20   \\
\bottomrule
\end{tabular}
\caption{Test statistics $\mathcal{Z}_\data$ for different values of $\sigma$ and $T$.}
    \label{tab:input_attack_appendix}
\end{table}

In \cref{tab:input_attack_appendix} are shown the test statistics $\mathcal{Z}_\data$ and $\mathcal{Z}_\poisoneddata$ for different values of $T$ and $\sigma$ when $u_t\sim\mathcal{N}(0,1)$ and $\Delta u_t=-u_t+a_t$, with $a_t\sim\mathcal{N}(0,1)$. Clearly, the presence of poisoning is  more detectable for small values of the variance of the process noise, while for higher values we need a significantly larger number of samples to be able to detect the effect of poisoning (for $\sigma=10$ we require $T=1000$ samples to detect a difference). The confidence regions for this specific attack depict this effect, and are shown in \cref{fig:input_attack_appendix}.
\newpage
\begin{figure}[h]
    \centering
    \includegraphics[width=\linewidth]{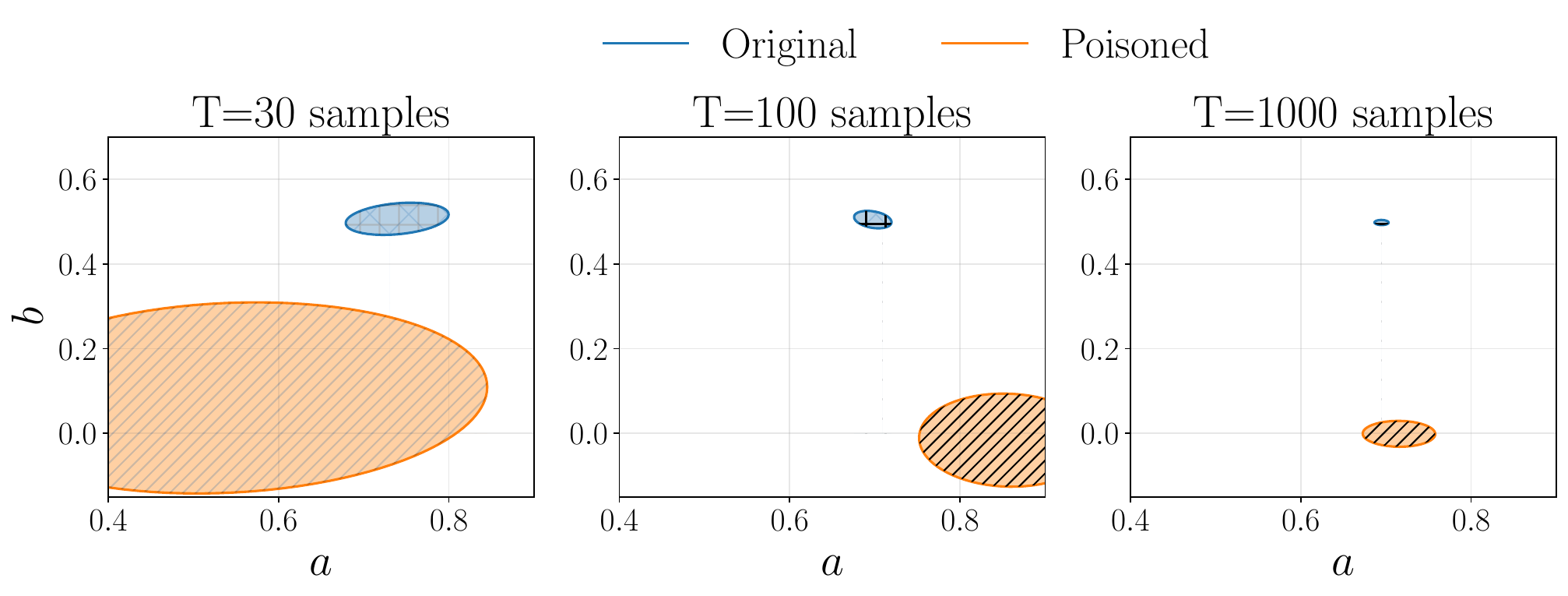}
    \includegraphics[width=\linewidth]{images/input_attack/input_poisoning_1.pdf}
    \includegraphics[width=\linewidth]{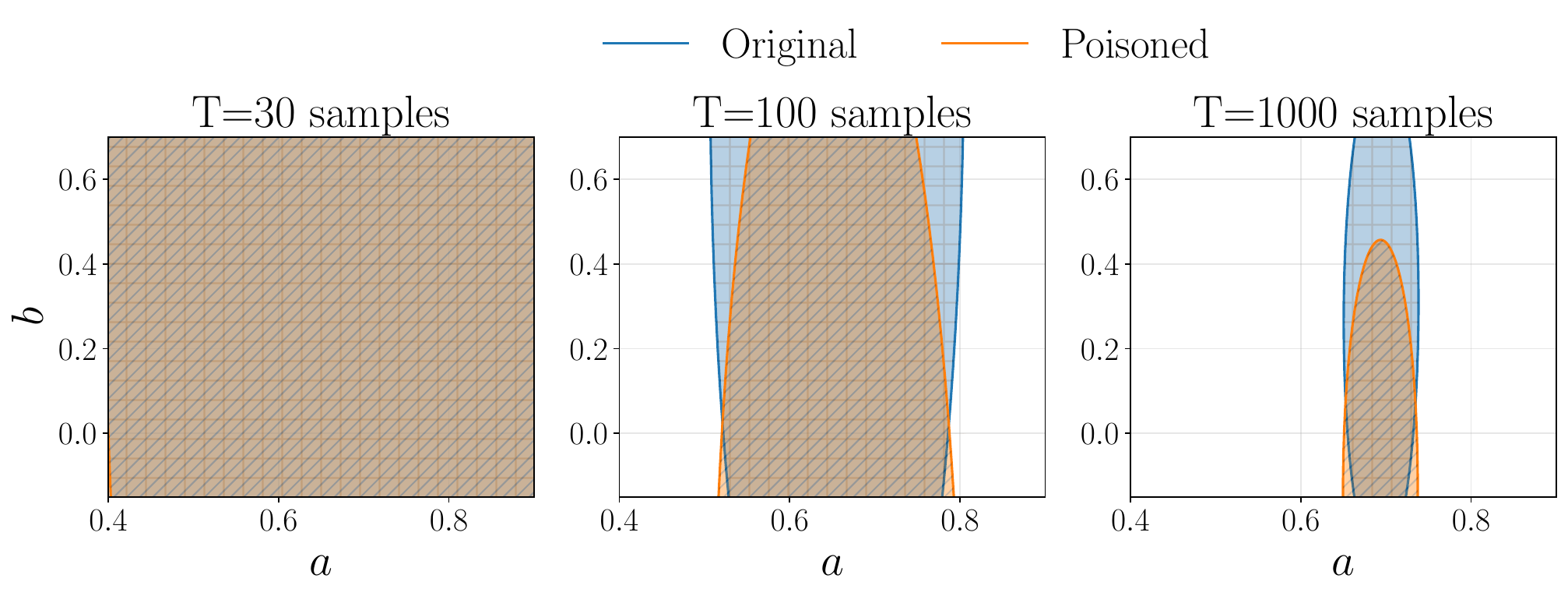}
    \caption{95\% confidence regions for $(a,b)$ and different values of the noise variance. From top to bottom: $\sigma=0.1$, $\sigma=1$, $\sigma=10$.}
    \label{fig:input_attack_appendix}
\end{figure}
\newpage
\subsubsection{Residuals maximization attack}
In \cref{example:residuals_variance} we consider a system described by the following transfer function
\begin{equation}
    P(s) = \frac{0.28261s+0.50666}{s^4-1.41833s^3+1.58939s^2-1.31608s+0.88642},
\end{equation}
sampled with sampling time $\Delta T=0.05s$. From lemma \ref{lemma:MSE_residuals_under_attack} we know that  $\|\tilde R\|_{\textnormal{F}}^2 = \trace(W_- F_TW_-^\top) +\trace(\Delta W_- F_T\Delta W_-^\top) +2\trace( W_- F_T \Delta W_-^\top)$, where $F_T=(I_T - \tilde M)$ and $\tilde M=\begin{bmatrix}
    \tilde X_- \\ \tilde U_- 
    \end{bmatrix}^\dagger \begin{bmatrix}
    \tilde X_- \\ \tilde U_- 
    \end{bmatrix}$ is a $T\times T$ idempotent matrix. Since the first term $ \trace(W_- F_TW_-^\top)$ does not depend on the attack signal, for our purposes we only need to consider the latter two terms. Therefore, if the attacker wants  to maximize the norm of the residuals, she simply needs to maximize 
$
   \trace(\Delta W_- F_T\Delta W_-^\top) +2\trace( W_- F_T \Delta W_-^\top).
$
 Since the true noise sequence $W_-$ is not available, we approximate it using the unpoisoned residuals $R$. Similarly, in the computation of $\Delta W_-$ we need to know what is the true parameter $(A_\star,B_\star)$. As an approximation, we replace this quantity by the unpoisoned estimate $(A_\LS, B_\LS)$ . Therefore, the objective of the attacker is to maximize
\begin{equation}
    \trace(\Delta \tilde W_- F_T\Delta \tilde W_-^\top) +2\trace( R F_T \Delta \tilde W_-^\top),
\end{equation}
where $\Delta \tilde W_-=\Delta X_- -A_\LS\Delta X_-- B_\LS \Delta U_-. $ This last term is clearly convex in $(\Delta X, \Delta U_-)$, consequently maximizing it is a concave problem, where the solution is attained at the boundary of the feasible convex set, defined by the inequalities $ \|\Delta X\|_{\textnormal{F}} \leq \delta \| X\|_{\textnormal{F}}$ and $ \|\Delta U_-\|_{\textnormal{F}} \leq \delta \| U_-\|_{\textnormal{F}}$. Note that the total number of parameters to optimize is $Tm + (T+1)n$.
\begin{figure}[t]
    \centering
    \includegraphics[width=\linewidth]{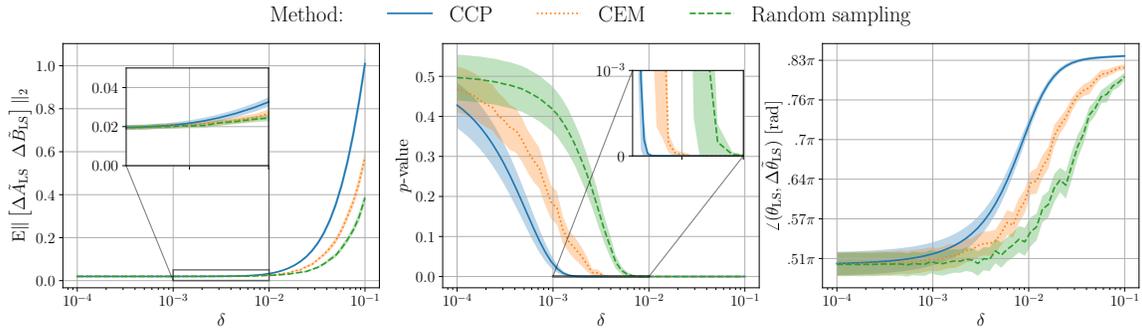}
    \caption{Untargeted attack that maximizes the MSE. On the left is shown the impact on the LS estimate; in the middle  the probability under the null hypothesis of having observed $\|\tilde R\|_{\textnormal{F}}^2$; on the right the angle between the unpoisoned estimate $\theta_\LS$ and the poisoned error of the estimate $\Delta\tilde\theta_\LS$. Shadowed areas depict 95\% confidence intervals over 100 runs.}
\end{figure}

To solve the problem we considered different techniques:
\begin{itemize}
    \item \textit{Approximately exact solution}: one way to solve the problem is to note that the problem can be rewritten as a difference of convex function. Therefore we can use convex-concave programming. In particular, we made use of the DCCP library \cite{shen2016disciplined} with the MOSEK solver.
    \item \textit{Cross-entropy method (CEM)}: the cross-entropy method \cite{de2005tutorial} can be used to solve the problem using samples generated from a Gaussian distribution (see also \cref{algo:cross_entropy_method}). We used the following parameters for \cref{algo:cross_entropy_method}: $\mu_0=0, \Sigma_0=I_{k}, M=500, N=50, \rho=0.1, \varepsilon=1.5,\alpha=0.5$ and $\lambda = 0.001$ (where $k=Tm + (T+1)n$ is the number of parameters to optimize).
    \item \textit{Random sampling from a Gaussian distribution}: this is similar to the cross-entropy method, we sampled $100$ random points according to $\mathcal{N}(0,I_k)$ (with $k=Tm+(T+1)n$) and re-scaled the covariance matrix to guarantee the constraints on the norm of the poisoning signals. We evaluated the MSE of each sample and chose the one that achieved the largest error.
\end{itemize}
In the simulations we considered a zero-mean Gaussian process noise with variance $\sigma_w=0.1$. The victim has collected $T=200$ samples using an i.i.d. control signal  $u_t\sim\mathcal{N}(0,1)$.

\begin{algorithm}[t] 
\caption{\textsc{Cross-Entropy Method -  Gaussian Distribution}}
\begin{algorithmic}[1]\label{algo:cross_entropy_method}
    \REQUIRE Initial mean  and covariance $\mu_0\in \mathbb{R}^k$ and $\Sigma_0\in \mathbb{R}^{k\times k}$; number of points $N$ and elite points $N_e$; maximum number of iterations $M$; function to evaluate $S$; smoothing factor $\alpha$; covariance threshold $\varepsilon$; regularizer $\lambda$.
    \smallskip
    \FOR{$t=0,\dots,M-1$ iterations}
        \STATE \textbf{Sampling.} Generate $N$ samples $\left\{x_1^{(t)},\dots, x_N^{(t)}\right\}$ from the density $\mathcal{N}(\mu_t,\Sigma_t)$.
        \STATE \textbf{Evaluation.} Evaluate the performance $ S(x_i^{(t)})$ of each sample $x_i^{(t)}$, and let $\mathcal{I}$ be the set of $N_e$ best performing samples.
        \STATE \textbf{Update.} {Compute mean and covariance using the set of best performing samples
        \begin{align*}
            \hat \mu_t &= \frac{1}{N_e}\sum_{j\in \mathcal{I}} x_{j}^{(t)},\\
            \hat \Sigma_t &= \frac{1}{N_e}\sum_{j\in \mathcal{I}} \left(x_{j}^{(t)} - \hat\mu_t\right)\left(x_{j}^{(t)} - \hat\mu_t\right)^\top,
        \end{align*}
        and update the parameters according to a smooth update
        \[
            \mu_{t+1} = (1-\alpha) \mu_t + \alpha \hat \mu_t,\quad \Sigma_{t+1} = (1-\alpha)\Sigma_t + \alpha (\hat \Sigma_t + \lambda I_k).
        \]
        \IF{$\sigma_{\text{max}}(\Sigma_{t+1})\leq \varepsilon$}
            \STATE Break the loop.
        \ENDIF
        }
    \ENDFOR
    \RETURN Overall best solution $x_{\text{best}}$ generated by the algorithm.
\end{algorithmic}
\end{algorithm}

\subsubsection{Stealthy attack}
We propose the following stealthy attack, computed by solving the following optimization problem:
\begin{equation}
        \begin{aligned}
            \max_{\Delta U_{-}, \Delta X} \quad & \norm{\begin{bmatrix}\Delta \tilde A_\LS & \Delta  \tilde B_\LS \end{bmatrix}}_2 
            \textrm{ s.t. }  g_i(\data, \Delta U, \Delta X)\leq \delta_i,\quad i=0,2,\dots,s+3,
        \end{aligned}
\end{equation}
where
\begin{align}
    g_0 &= \|\Delta X\|_\textnormal{F}/\|X\|_\textnormal{F},\\
    g_1&= \|\Delta U\|_\textnormal{F}/\|U\|_\textnormal{F},\\
    g_2&=\left|1-\|\tilde R\|_{\textnormal{F}}^2/\| R\|_{\textnormal{F}}^2 \right|,\\
    g_3 &= |1-Z_\poisoneddata /Z_\data|,\\
    g_{3+\tau} &=\left | 1- \|\tilde C_\tau \tilde C_0^{-1}\|_{\textnormal{F}}^2 /\| C_\tau  C_0^{-1}\|_{\textnormal{F}}^2 \right |,\quad \tau=1,\dots, s,
\end{align}
for some $s<T-1$.  As previously explained, the constraints limit the relative change of each quantity and can be made more granular, for example, by limiting the norm of the signals at each time step $t$.

\paragraph{Simulation settings.} We applied the attack on the following continuous-time system
\begin{equation}
    P(s) = \frac{0.28261s+0.50666}{s^4-1.41833s^3+1.58939s^2-1.31608s+0.88642},
\end{equation}
sampled with sampling time $\Delta T=0.05s$. The process noise is zero-mean white noise with standard deviation $\sigma_w=0.1$. Using $10$ different seeds, we collected $10$ different datasets $\data$, each with $T=500$ samples. For each datataset, 10 attacks were computed by solving the above optimization problem using sequential quadratic programming (SLSQP, a quasi-Newton approach). Out of the $10$ best attacks for each dataset, we kept only the attack that maximized the MSE.
To run the simulations, we used a local stationary computer with Ubuntu 20.10, an Intel® Xeon® Silver 4110 Processor (8 cores) and 64GB of ram. On average, it took $7$ hours to compute $10$ attacks on a single dataset $\data$, for $\delta\in\{0.01,0.025,0.05,0.075,0.1\}$. The total simulation time took approximately  $7$ days. For further details, please refer to the repository.

\paragraph{Results and discussion.} In \cref{fig:appendix_statistics}, \cref{fig:appendix_residuals_correlation} and \cref{fig:appendix_example_poisoning_delta} are shown the results of the simulation. In \cref{fig:appendix_statistics} and \cref{fig:appendix_residuals_correlation} are shown the statistical quantities of interest. In \cref{fig:appendix_statistics} is also shown the boxplot of all leverage scores $h_{ii}$, $i =0,\dots,T-1$ ($h_{ii}$ is the $i$-th element along the diagonal of $\tilde \Psi_-^\dagger \tilde \Psi_-$). We note that for small values of $\delta$, the norm of the residuals $\|\tilde R_t\|_2$, as well as the norm of the normalized correlation matrices $\|\tilde C_\tau \tilde C_0^{-1}\|_\frobenius^2$, are comparable to the unpoisoned case. Interestingly, the norm of the residuals seem to be more affected by the poisoning for large values of $\delta$, which is,  however, not captured by the leverage score in \cref{fig:appendix_statistics}.

From \cref{fig:appendix_residuals_correlation} we also observe that the constraints $g_0$ and $g_1$ for large values of $\delta$ may not be very effective. In fact we note that the average norm of the residuals tend to decrease, while the number of peaks increases. This suggests that an attacker may use more granular constraints, which, however, will make the attack less effective.

However, these results also demonstrate that this attack can severely impact the LS estimate for small values of $\delta$, compared to the residuals attack presented in \cref{example:residuals_variance}. In comparison, for $\delta=0.05$ the attack impact has more than doubled, while  the statistical indicators show no evidence of  anomaly. In addition,  for small values of $\delta$ the residuals visually do not change in a sensible way, and an  analysis of the outliers, based on the concept of leverage, shows  no statistical difference for any values of $\delta$ . Furthermore, for any value of $\delta$ the attack seems to impact more the estimate of $A_\LS$ rather than $B_\LS$,  due to the presence of the constraint $g_3$.

\begin{figure}[h]
    \centering
    \includegraphics[width=0.9\linewidth]{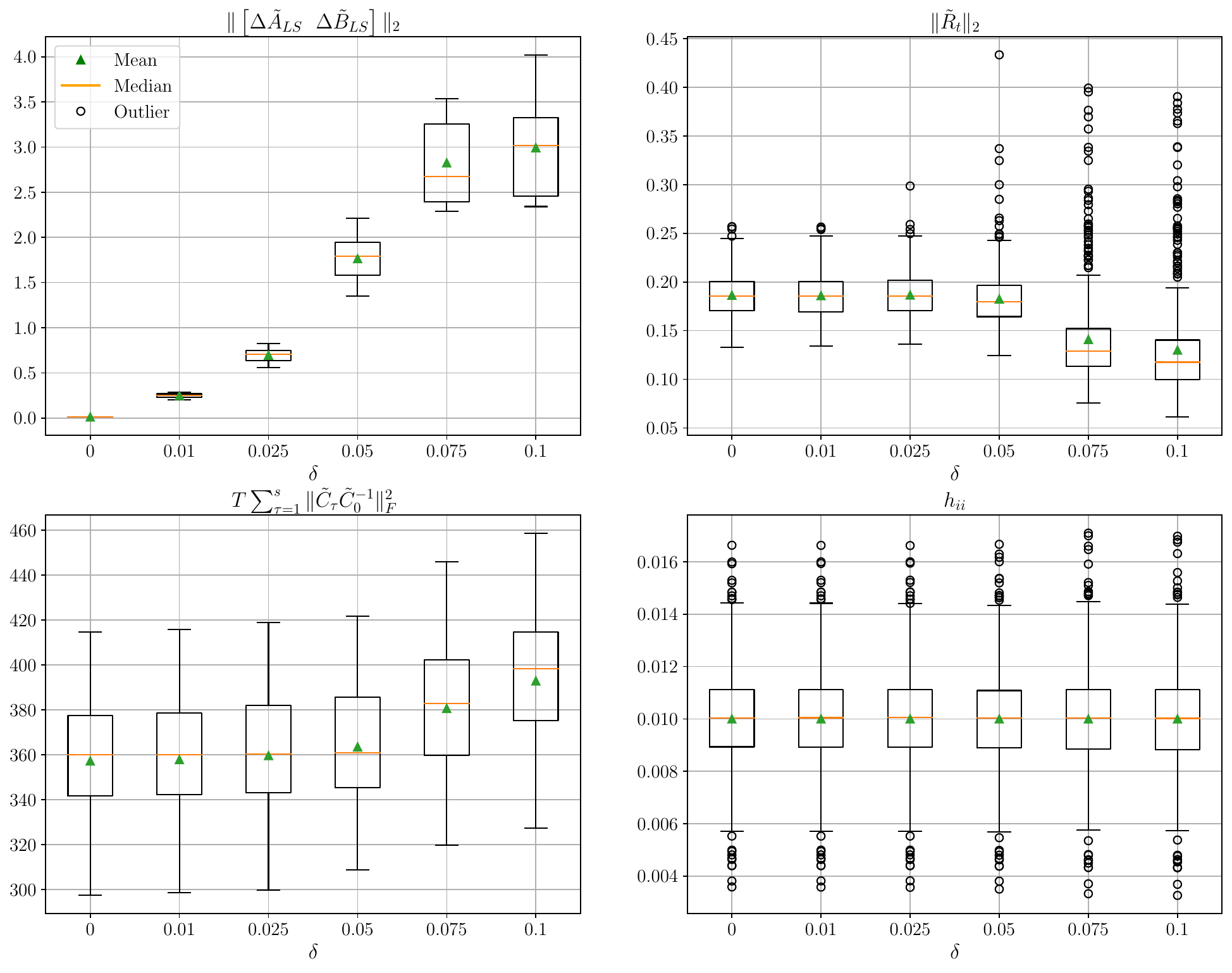}
    \includegraphics[width=0.9\linewidth]{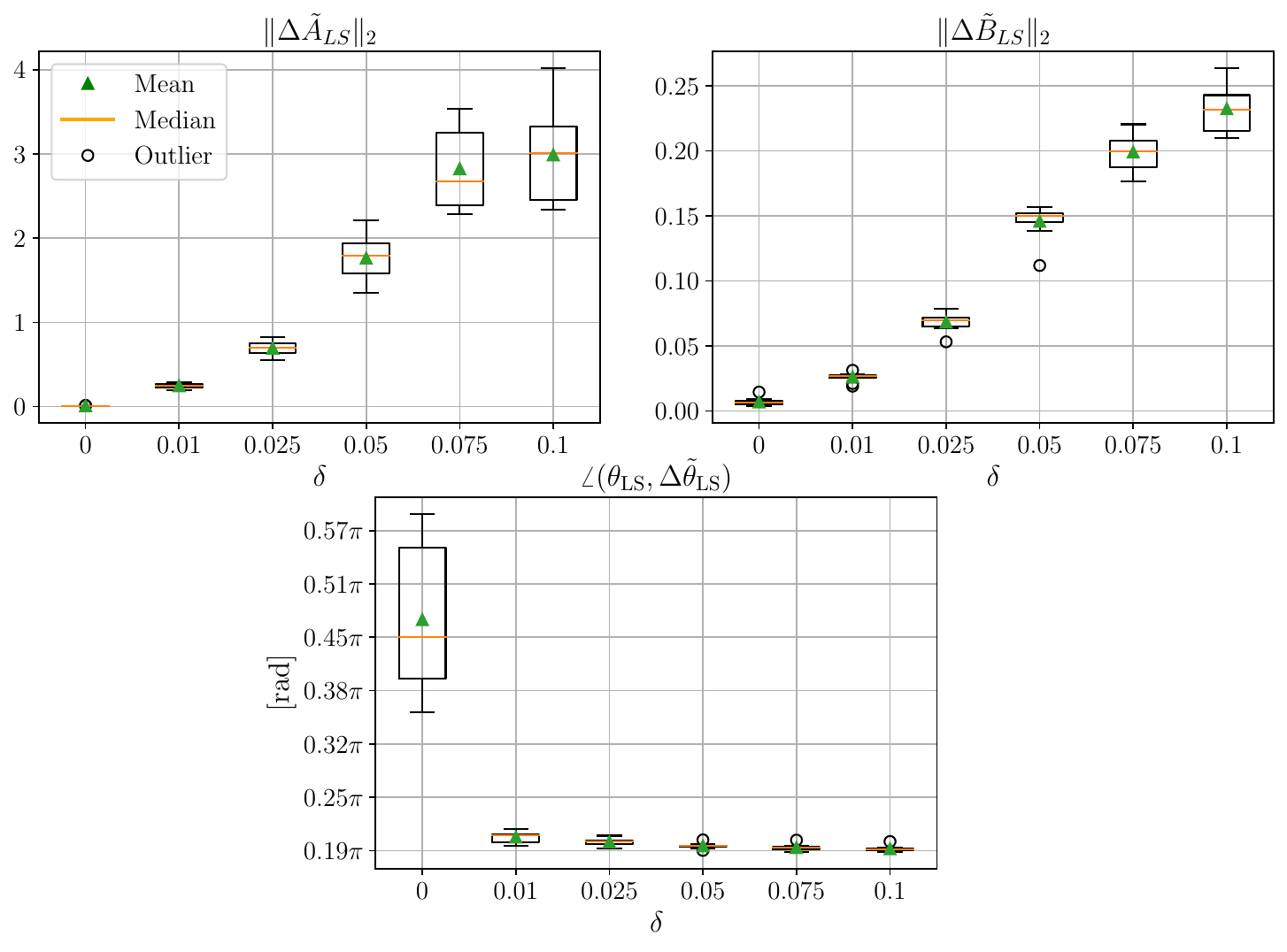}
    
    \caption{Statistics of the attack in \cref{eq:optimal_attack} over $10$ datasets for different values of $\delta$.}
    \label{fig:appendix_statistics}
\end{figure}

\begin{figure}
    \centering
    \includegraphics[width=0.9\linewidth]{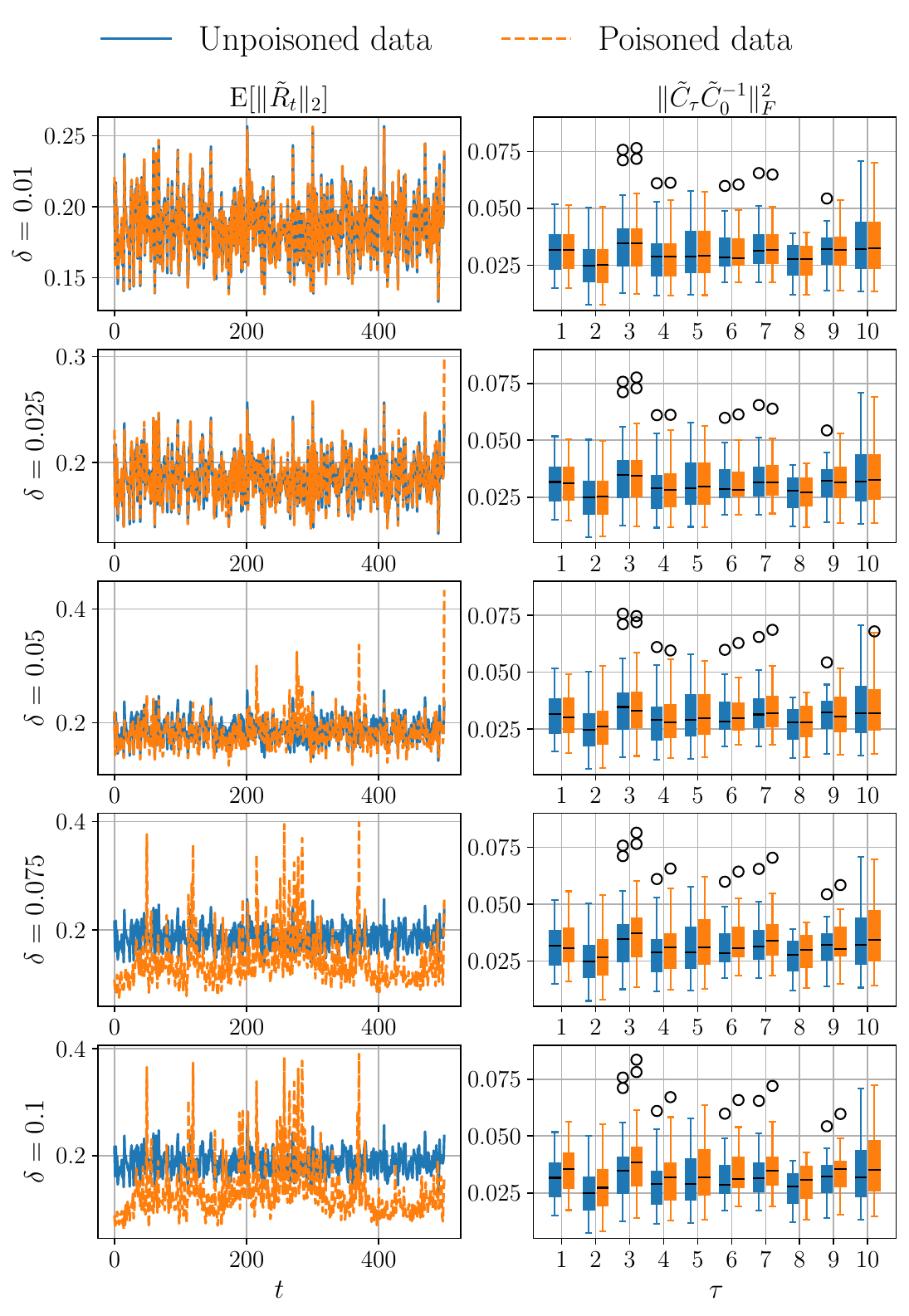}
    \caption{Residuals analysis for one of the dataset. On the left is shown the norm of the residuals when the data has been poisoned according to the attack strategy in \cref{eq:optimal_attack}, for different values of $\delta$. On the right is shown $\|\tilde C_\tau \tilde C_0^{-1}\|_\frobenius^2$ for different values of $\tau$.}
    \label{fig:appendix_residuals_correlation}
\end{figure}

\begin{figure}
    \centering
    \includegraphics[width=0.93\linewidth]{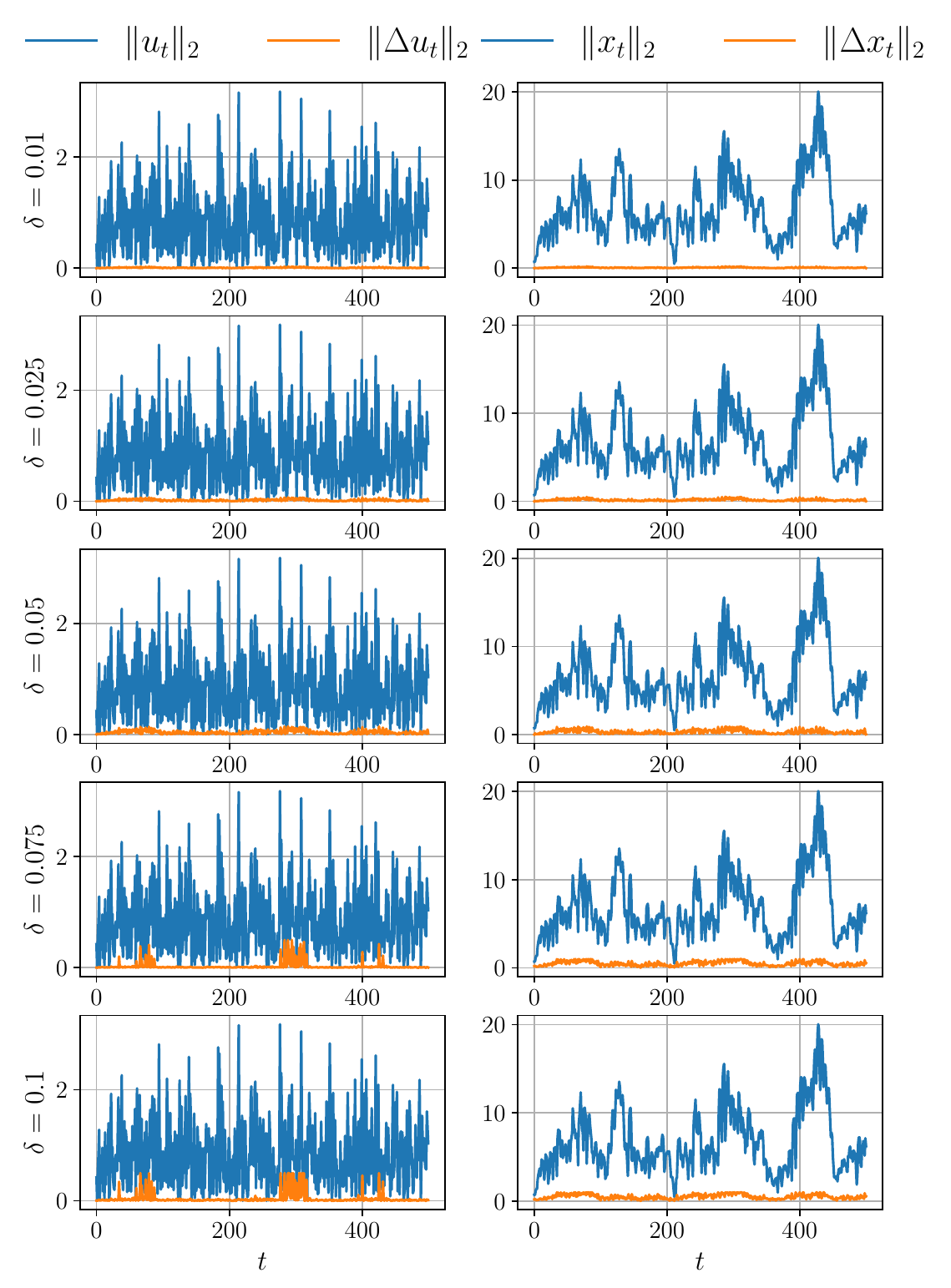}
    \caption{Comparison between the original signals $\{(u_t,x_t)\}_t$ and the poisoning signals $\{(\Delta u_t,\Delta x_t)\}_t$ for one of the dataset (computed by solving \cref{eq:optimal_attack} for different values of $\delta$).}
    \label{fig:appendix_example_poisoning_delta}
\end{figure}
\else\fi
\end{document}